\documentclass[12pt, draftclsnofoot, onecolumn]{IEEEtran}

\usepackage[english]{babel}
\usepackage[utf8]{inputenc}
\usepackage{amsfonts,dsfont,amsthm,amsmath,amssymb,amscd,epsfig,
			bm,subfigure,graphicx,mathtools}
\usepackage{url}
\usepackage{booktabs}
\usepackage[noadjust]{cite}

\usepackage{tikz}
\usetikzlibrary{decorations.markings}
\usepackage{pgfplots, pgfplotstable}
\pgfplotsset{compat=1.10}
\usepgfplotslibrary{fillbetween}
\usetikzlibrary{arrows,calc}

\newtheorem{proposition}{Proposition}
\newtheorem{theorem}{Theorem}
\newtheorem{corollary}{Corollary}
\newtheorem{lemma}{Lemma}

\newcommand{\euler}{\mathrm{e}}

\def\N{{\mathbb N}}
\def\R{{\mathbb R}}

\def\P{{\mathbb P}}

\def\E{{\mathbb E}}

\def\P{{\mathbb P}}
\def\ind{\mathds{1}}
\def\cal{\mathcal}

\begin{document}

\author{Plínio~S.~Dester,
		Paulo~Cardieri,
        José~M.~C.~Brito%
        }

\title{On the Stability of a $N$-class Aloha Network}

\date{\today}

\IEEEtitleabstractindextext{%
\begin{abstract}
        Necessary and sufficient conditions are established for the stability of a high-mobility $N$-class Aloha network, where the position of the sources follows a Poisson point process, each source has an infinity capacity buffer, packets arrive according to a Bernoulli distribution and the link distance between source and destination follows a Rayleigh distribution. It is also derived simple formulas for the stationary packet success probability and mean delay.
\end{abstract}}

\maketitle
\IEEEdisplaynontitleabstractindextext

%%%%%%%%%%%%%%%%%%%%%%%%%%%%%%%%%%%%%%%%%%%%%%%%%
\section{Introduction}

Since the first deployments of large-scale wireless communications systems based on cellular technology in the mid-1970s, there has been an increasing demand for wireless communication services, which has led to the permanent search for more efficient use of radio resources. This situation is now more exacerbated, with applications that require higher data rates, such as those based on video streaming services, or scenarios with a larger number of terminals, as in the situations envisaged by the Internet of Things. In this sense, next-generation system developers and service providers are facing perhaps unthinkable challenges in the 1970s. Challenges, such as data rates of up to tens of Gb/s, latency in the order of milliseconds, and reduced energy consumption to 10\% of current consumption, are set as targets for the fifth-generation cellular system (5G System) \cite{Boccardi2014}. 

In one scenario envisioned for the 5G systems, a number of subnetworks will co-exist in the same geographic area, sharing radio resources. Each of these subnetworks will be dedicated to serve a particular type of application and/or scenario, with its own requirements, such as coverage, transmission rates and maximum acceptable latency \cite{Ghosh2012}. In fact, it seems to be a consensus in the academic and industrial communities that the goals imposed on 5G systems will only be achieved through the use of heterogeneous networks. Thus, the evaluation of the performance of such systems and the design of techniques that efficiently exploit radio resources require a better understanding of the mechanisms involved in the transmission of a message through the wireless medium.

In this work, we are interested in studying the performance of a heterogeneous network in which $N$ classes of users share the same radio resources, \emph{i.e.}, radio spectrum and transmission power. Each of these user classes has its own characteristics, such as terminal density, transmit power and traffic intensity, and quality of service requirements, namely, communication link quality and maximum tolerable delay. Scenarios like this one are expected to be found in 5G systems, involving, for example, applications of Internet of Things, in which thousands of wireless terminals connected to sensors access the wireless network to transmit their messages. These wireless connections may involve an access point, in a cellular mode, or, alternatively, terminals may communicate directly with each other, in the so-called device-to-device mode (D2D). Terminals may be associated with different applications, with different quality of service requirements, such as maximum acceptable delay and minimum transmission rate.

The scenario studied in this paper has been investigated in several studies found in the literature. A particular interest has been observed in the situation where packets arrive at the terminals randomly, and packets waiting for transmission are stored in queues. In such a situation, mutual interference among terminals makes the queues of the terminals coupled, since the transmission success probability of a terminal (\emph{i.e.}, the service rate of the queue associated with that terminal) depends on the state of the queues of other terminals (if their queues are either empty or non-empty). The analysis of networks with coupled queues is known to be difficult, especially when the capture model\footnote{According to the capture model, a packet is successfully received if the corresponding signal to interference plus noise ratio (SINR) at the receiver is above a certain threshold. In contrast, the collision model states that a packet transmission is successful only if there are no concurrent transmissions.} is adopted. To overcome this difficulty, several authors have used the concept of stochastic dominance (see, for instance,  \cite{Ephremides1988}, \cite{Naware2005}), which allows to determine the conditions for queue stability.

Stamatiou and Haenggi \cite{stamatiou2010random} combined the use of the stochastic dominance technique with stochastic geometry results to study the stability of random networks, where terminals are located according to Poisson point processes. Conditions for queue stability were determined in \cite{stamatiou2010random} for a network with one and two classes of users.

The present work extends the results shown in \cite{stamatiou2010random}, expanding the formulation that describes the behavior of users in a random network with $N$ classes. We derive expressions in closed and simple forms for the necessary and sufficient conditions for the stability of the queues at the terminals of each class. More specifically, we have established the necessary and sufficient conditions relating user densities, transmission power levels and traffic intensities, that ensure the terminal queues of all classes will be stable. In addition, we show that, in the case of stable networks, the portion of the radio resource allocated to each class is well defined by a simple  expression relating its average delay, intensity of traffic, density of terminals, and the minimum acceptable signal-to-interference ratio (\emph{i.e.}, link quality). 

The network model adopted in the analysis presented here is based on a model used in \cite{stamatiou2010random, munari2016stability, nardelli2014throughput, zhou2016performance}, but with a key difference: while in these papers the separation distance between TX and RX terminals is assumed fixed, in our work we assume that this distance follows a Rayleigh distribution. The Rayleigh distribution assumption for the link distance was also used in other works, \emph{e.g.}, \cite{lin2014spectrum}. This assumption allowed us to obtain simple mathematical expressions relating traffic intensity, average delay, density of terminals and the required link quality of each class, when the network is stable. While the existence of an interplay among these parameters in a scenario where terminals share radio resources is intuitive, the formulation proposed here unveils this relationship, showing it in a simple way, allowing for insights into the trade-offs amongst key network parameters. 

Based on the formulation proposed here, we numerically evaluate the performance of a heterogeneous network with $N=2$ classes of terminals that share the same channel: cellular terminals, which access a base station or an access point, and D2D terminals, which communicate directly with each other. In particular, we consider the scenario where D2D terminals can access the channel used by cellular transmissions, but without causing excessive degradation to the performance of the cellular terminals. Using the formulation proposed here, we determine the maximum acceptable traffic intensity of D2D users that guarantees the average delay of cellular users does not exceed a given threshold.

The rest of the article is organized as follows: Section \ref{SystModel} describes the model used throughout the paper; in Section \ref{sec:one_user_class} we derive stability conditions and the mean delay for a simplified network, where all but one traffic class transmit dummy packets; Section \ref{sec:N-users} presents the main results of the paper, \emph{i.e.}, necessary and sufficient conditions for stability when we have $N$ interacting traffic classes, it also shows a simple expression for the stationary mean delay and the packet success probability; Section \ref{sec:application} applies the obtained results in two simplified scenarios: one scenario optimizes the transmission power of different traffic classes of D2D with different delay requirements sharing the same channel and the other analyses the performance of a D2D class sharing a channel with a cellular class (\textit{uplink}), where we set some delay requirements. The notations used in the paper are summarized in Table~\ref{tab:symbols}.

% % % % % % % % % % % % % % % % % % % % % % % % % % % % % %
\section{System Model}
\label{SystModel}

\begin{table}[hbt]
   \centering
   \caption{Notations and Symbols Used in the Paper}
   \begin{tabular}{ll}
       \toprule
       \textbf{Symbol} & \textbf{Definitions/explanation} \\
       \midrule
        $\alpha\in(2,\infty)$	& path loss exponent \\
        $\delta\in(0,1)$		& $=2/\alpha$ \\
        $N$						& number of traffic classes \\
        $\cal{N}$				& the set $\{1,2,\dots,N\}$ \\
        $n\in\cal{N}$			& refers to the $n$-th traffic class \\
        $p_n$					& medium access probability \\
       	$a_n\in(0,1)$			& packet arrival rate per time slot \\
        $p_{s,n}$				& packet success probability \\
        $\theta_n$				& SIR threshold for successful communication \\
        $D_n\in(1,\infty)$		& average packet transmission delay \\
        $\overline{R}_n$		& mean transmission distance \\
        $P_n$					& transmission power \\
        $\Phi_n$				& Poisson point process for the sources \\
        $\lambda_n$				& density of $\Phi_n$ \\
        $\phi_n$ 				& $\triangleq 4\,\Gamma(1+2/\alpha)\,
        						  \Gamma(1-2/\alpha)\,
                        		  \overline{R}_n^2\,\theta_n^{2/\alpha}$ \\
        $||\cdot||$				& euclidean norm \\
        $\ind_A(x)$				& indicator function\\
       \bottomrule
   \end{tabular}
   \label{tab:symbols}
%   \vspace*{-\baselineskip}
\end{table}

For each time slot $t \in \N$ and each traffic class $n \in \cal{N} = \{1,2,\dots,N\}$, we have a homogeneous Poisson point process (PPP) denoted by $\Phi_n(t) \subset \R^2$ of density $\lambda_n$, which represent the position of the sources. These PPP are independent from each other and from the past.
Each source of traffic class $n$ transmits with power $P_n$. The position of the sources are given by $\{ X_{i,n}(t) \}_i$, $i \in \N$, \textit{i.e.}, $\Phi_n(t) = \{ X_{i,n}(t) \}_i$. More precisely, for each time slot the position $X_{i,n}(t)$ of the source is reallocated following the high-mobility random walk model presented in \cite{baccelli2010stochastic}. The $i$-th source of traffic class $n$ communicates with a destination located at $Y_{i,n}(t)$. Thus, the distance between the $i$-th source of class $n$ and its destination is given by $R_{i,n}(t) = || X_{i,n}(t) - Y_{i,n}(t) ||$.
The random variables $\{Y_{i,n}(t)\}_t$ are defined such that $\{ R_{i,n}(t)\}_t$ are i.i.d. and distributed as Rayleigh, with mean transmission distance represented by $\overline{R_n}$. We have chosen this distribution, because it leads to simple results and it has a physical interpretation\footnote{Let $\Pi \subset \R^2$ be a PPP of density $\kappa$ and $R$ be the euclidean distance between the origin and the closest point of $\Pi$. Then, the p.d.f. of $R$ is $f_R(r) = 2\kappa\pi r \euler^{-\kappa\pi r^2}$, which is the Rayleigh density function. Furthermore,  $\E[R] = 1/\sqrt{4\kappa}$.}.
The occupation of the buffer at each source is represented by its queue length $\{ Q_{i,n}(t) \}$ of infinite capacity. The probability of a packet arrival at each queue is denoted by $a_n$ and the medium access probability by $p_n$. Within each slot, the first event to take place for each source with a non-empty queue is the medium access decision with probability $p_n$. If it is granted access and the SIR\footnote{We assume that thermal noise is negligible.} is greater than a threshold $\theta_n>0$, a packet is successfully transmitted and leaves the queue. Then, we have the arrival of the next packet with probability $a_n$. The last event to take place is the displacement of the sources and destinations. For more details about the order in which these events occur, see \cite{stamatiou2010random}. The main difference between this model and the one presented in \cite{stamatiou2010random} is that $R$ follows a Rayleigh distribution, instead of being constant.

The queue lengths of the source $i$, traffic class $n$ are Markov Chains represented by
\begin{equation}
	Q_{i,n}(t+1) = (Q_{i,n}(t) - D_{i,n}(t))_+ + A_{i,n}(t), \quad t \in \N,
\end{equation}
where $(\cdot)_+ = \max\{\cdot,0\}$, $A_{i,n}(t)$ are i.i.d. Bernoulli random variables of parameter $a_n$ and
\begin{equation*}
	D_{i,n}(t) = e_{i,n}(t)\,\ind_{\text{SIR}_{i,n}>\theta_{n}},
\end{equation*}
where $e_{i,n}(t)$ are i.i.d. Bernoulli random variables of parameter $p_n$, $\theta_n$ represents the SIR threshold for successful communication and the SIR of user $i$, traffic class $n$ is given by
\begin{equation}
	\text{SIR}_{i,n}(t) = \dfrac{P_n\,h_{i,n,i,n}(t)\,R_{i,n}(t)^{-\alpha}}
    {\sum_{(j,k)\neq(i,n)} P_k\,h_{j,k,i,n}(t)\,||X_{j,k}(t) - Y_{i,n}(t)||^{-\alpha} },
\end{equation}
where $h_{j,k,i,n}(t)$ are i.i.d. exponential distributed random variables of parameter 1 and represent the Rayleigh fading, $\alpha > 2$ is the path loss fading parameter.
% Note that $||X_{j,k}(t) - Y_{i,n}(t)||$ can be replaced by $||X_{j,k}(t)||$, since a PPP is stationary, \textit{i.e.}, it is invariant by translation \cite{haenggi2009stochastic}.

% % % % % % % % % % % % % % % % % % % % % % % % % % % % % %
\section{Single User Class Network Analysis}
\label{sec:one_user_class}

In this section, we analyze the behavior of one traffic class, given that all the other traffic classes transmit \emph{dummy} packets, \emph{i.e.}, their users always have packets to transmit. We are considering the buffer of only one traffic class. Without loss of generality let us study the first traffic class. From now on, for this section, whenever the subscript regarding the traffic class is omitted, we are referring to the first traffic class, \emph{i.e.}, $n=1$. This section is a stepping stone for the next and main section of the paper. It also compares the results of the modified model with the results of the original model \cite{stamatiou2010random}, where $R$ is constant.

% -------------------------------- %
\subsection{Stability Conditions and Stationary Analysis}
Sufficient and necessary conditions for stability of the buffers are shown in the following proposition.
\begin{proposition} \label{prop:stab_1user}
The queueing system $\{Q_{i}(t)\}$ is stable in the sense defined by \cite{szpankowski1994stability} if and only if
	\begin{equation} \label{eq:single_a}
		a < \dfrac{p}{1 + \phi\,(\lambda\,p + \zeta)},
	\end{equation}
where $\phi \triangleq 4\,\Gamma(1+2/\alpha)\,\Gamma(1-2/\alpha)\, \overline{R}^2\,\theta^{2/\alpha}$ and $\zeta \triangleq \sum_{n=2}^N (P_n/P_1)^{2/\alpha}\,\lambda_n\,p_n$. Then, the closure of arrival rates is given by
	\begin{equation}
		a \leq \dfrac{1}{1 + \phi\,(\lambda + \zeta)}.
	\end{equation}
\end{proposition}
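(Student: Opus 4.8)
The plan is to follow the stochastic-dominance approach of \cite{stamatiou2010random, Ephremides1988}, adapted to a Rayleigh-distributed link distance. First I would introduce the \emph{dominant} (saturated) system $\{\widetilde Q_i(t)\}$, in which every source of class $1$ always has a packet to transmit, sending a dummy packet whenever its true queue is empty, while feeding it the same arrival, access, fading and mobility realizations as the original system. A slot-by-slot coupling then shows $\widetilde Q_i(t)\ge Q_i(t)$ for all $i$ and $t$: whenever a class-$1$ queue is non-empty in both systems it sees a larger interfering set in the dominant one, so under a common fading realization its success indicator can only be smaller, hence its departures only fewer; thus stability of the dominant system implies stability of the original one. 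Conversely, by the indistinguishability argument, if the dominant system is unstable then after an a.s.\ finite time no class-$1$ queue ever empties, so from that time on the two systems produce identical sample paths, and instability of the dominant system implies instability of the original one. It therefore suffices to decide stability of the dominant system.

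In the dominant system the active transmitters of class $1$ form, after independent $p$-thinning, a PPP of intensity $\lambda p$, and those of class $n\ge 2$ a PPP of intensity $\lambda_n p_n$; by the high-mobility assumption these are redrawn independently in every slot. Conditioning on a typical class-$1$ link (Slivnyak) at distance $R$ and integrating out the unit-mean exponential fading of the useful link, $\P(\text{SIR}>\theta\mid R)=\mean[\euler^{-(\theta R^{\alpha}/P_1)\,I}\mid R]$ with $I$ the aggregate interference. Applying the probability generating functional of the PPP and integrating out the interferers' exponential fading reduces each class to the integral $\int_{\R^2}\bigl(1-(1+\theta R^{\alpha}(P_n/P_1)\|x\|^{-\alpha})^{-1}\bigr)\,dx=\pi\,\theta^{\delta}R^{2}(P_n/P_1)^{\delta}\,\Gamma(1+\delta)\Gamma(1-\delta)$, the last step by the substitution $u=\|x\|^{\alpha}$ together with Euler's reflection formula ($\delta=2/\alpha$). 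Summing over classes gives $\P(\text{SIR}>\theta\mid R)=\exp\bigl(-\pi\theta^{\delta}\Gamma(1+\delta)\Gamma(1-\delta)(\lambda p+\zeta)\,R^{2}\bigr)$, and averaging over the Rayleigh law of $R$, for which $\mean[\euler^{-\beta R^{2}}]=(1+\tfrac{4}{\pi}\overline R^{2}\beta)^{-1}$, yields the per-slot success probability $p_s=(1+\phi(\lambda p+\zeta))^{-1}$.

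Consequently, the tagged queue of the dominant system obeys the Lindley recursion $\widetilde Q(t+1)=(\widetilde Q(t)-D(t))_{+}+A(t)$ with $D(t)=e(t)\ind_{\text{SIR}(t)>\theta}$ and $\{(A(t),D(t))\}_t$ i.i.d.\ (positions and fading are regenerated each slot), so the increments $A(t)-D(t)$ are i.i.d.\ with mean $a-p\,p_s$. Loynes' theorem then gives stability of this recursion if and only if $a-p\,p_s<0$, i.e.\ if and only if \eqref{eq:single_a} holds, the boundary case $a=p\,p_s$ being null recurrent and hence unstable; transferring this through the dominance and indistinguishability of the first step establishes the claimed equivalence for $\{Q_i(t)\}$. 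Finally, the closure of the admissible arrival rates follows by optimizing the right-hand side of \eqref{eq:single_a} over $p\in[0,1]$: since $p\mapsto p/(1+\phi\zeta+\phi\lambda p)$ has derivative $(1+\phi\zeta)/(1+\phi\zeta+\phi\lambda p)^{2}>0$, it is increasing and maximized at $p=1$, which gives $a\le 1/(1+\phi(\lambda+\zeta))$ upon taking the closure.

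I expect the delicate point to be the rigorous version of the first step: verifying that the coupling is valid in every slot (monotonicity of the SIR-success indicator when the interfering point set only grows, under a shared fading realization) and that ``stability in the sense of \cite{szpankowski1994stability}'' for the infinite, interference-coupled family $\{Q_i(t)\}$ is genuinely equivalent to the threshold obtained from the saturated system --- this is exactly where the high-mobility independence of successive node locations is used. The stochastic-geometry computation of the middle step is otherwise routine, but one must keep careful track of the exponents $\theta^{\delta}$, $\overline R^{2}$ and the power ratios $(P_n/P_1)^{\delta}$ so that they recombine into $\phi$ and $\zeta$.
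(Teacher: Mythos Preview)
Your proposal is correct and follows essentially the same approach as the paper: reduce to the dominant (saturated) system via stochastic dominance and indistinguishability, compute the per-slot success probability in that system by the PPP/PGFL machinery and then decondition on the Rayleigh-distributed link distance, and invoke Loynes' criterion $a<p\,p_s$; the closure by optimizing over $p$ is likewise what the paper intends by ``immediate.'' You supply more detail than the paper (which simply cites \cite{stamatiou2010random} and \cite{haenggi2009stochastic} for these steps), but the structure and key computations are identical.
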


\begin{proof}
Using the same arguments as in the proof of \cite[Proposition~1]{stamatiou2010random}, we have stability if and only if $\E[A_{i}(t)] < \E[D_{i}(t)]$ for the case where the first traffic class also transmits dummy packets (dominant network). Then, the effective PPP density of active sources from the $n$-th traffic class is $p_n\,\lambda_n$ and the result showed in \cite[Eq. (9)]{haenggi2009stochastic} gives that
\begin{equation}\label{eq:single_SIR}
\begin{split}
	\P(\text{SIR}_{i}(t) > \theta &\mid R_{i}(t) = r) \\
    &= \exp\left( - \pi \Gamma(1+2/\alpha) \Gamma(1-2/\alpha)\,\theta^{2/\alpha}\,r^2\,(\lambda\,p + \zeta) \right).
\end{split}
\end{equation}
The p.d.f. of $R_i(t)$ is given by $f_R(r) = 2\kappa\pi\,r\,\euler^{-\kappa\pi\,r^2}$, where $\kappa = 1/4\overline{R}^2$. Then, it is easy to calculate $\E[D_{i}(t)]$ by deconditioning Eq.~\eqref{eq:single_SIR}, which results in the right-hand side of \eqref{eq:single_a}. The left-hand side and the closure of \eqref{eq:single_a} is immediate.
\end{proof}

In the case where the system is stable, we can calculate the stationary probabilities, as showed in the following proposition.
\begin{proposition} \label{prop:ps}
When the system is stable, the stationary packet success probability is given by
	\begin{equation*}
		p_{s} = \dfrac{1 - \phi\,\lambda\,a}{1 + \phi\,\zeta}.
	\end{equation*}
\end{proposition}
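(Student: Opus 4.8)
The plan is to pin down the stationary regime of a typical class-$1$ queue through a fixed-point equation that couples the success probability $p_s$ with the stationary probability that a queue is non-empty, and then to solve that equation in closed form.

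First I would fix a typical class-$1$ source $i$ and set $\rho \triangleq \P(Q_{i}(t)>0)$ in the stationary regime; by translation invariance of the PPP and the i.i.d.\ structure of arrivals and access decisions, $\rho$ is the same for every $i$. In a given slot a class-$1$ source actually transmits iff its queue is non-empty (probability $\rho$) \emph{and} it is granted access (probability $p$), these two events being independent in the high-mobility model, so the point process of active class-$1$ transmitters is an independent thinning of $\Phi_1(t)$ with retention probability $\rho\,p$, i.e.\ a PPP of density $\lambda\,p\,\rho$. The classes $2,\dots,N$ are saturated, so their active densities remain $\lambda_n p_n$. Conditioning on $R_i(t)=r$ and invoking \cite[Eq.~(9)]{haenggi2009stochastic} exactly as in the proof of Proposition~\ref{prop:stab_1user}, but with aggregate interferer density $\lambda\,p\,\rho + \zeta$, gives
\[
	\P\big(\text{SIR}_{i}(t)>\theta \mid R_i(t)=r\big)
	= \exp\!\left( -\pi\,\Gamma(1+2/\alpha)\,\Gamma(1-2/\alpha)\,\theta^{2/\alpha}\,r^2\,(\lambda\,p\,\rho + \zeta) \right),
\]
and deconditioning over the Rayleigh law of $R_i(t)$ (with $\kappa = 1/(4\overline{R}^2)$), which is the same Gaussian integral already performed for \eqref{eq:single_a}, yields $p_s = \big(1 + \phi\,(\lambda\,p\,\rho + \zeta)\big)^{-1}$.

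The second ingredient is a rate-conservation identity: since the system is stable it admits a stationary regime in which the throughput of each queue equals its arrival rate, and a departure happens in a slot iff the queue is non-empty, access is granted, and the SIR clears the threshold, hence $\rho\,p\,p_s = a$. Substituting $\rho\,p = a/p_s$ into the expression for $p_s$ gives $p_s\big(1 + \phi(\lambda a/p_s + \zeta)\big) = 1$, i.e.\ $p_s + \phi\lambda a + p_s\phi\zeta = 1$, which rearranges to $p_s = (1 - \phi\lambda a)/(1 + \phi\zeta)$, as claimed; note that \eqref{eq:single_a} forces $\phi\lambda a < 1$, so $p_s>0$.

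The step I expect to be the main obstacle is the justification of the Poisson-thinning claim, namely that in steady state the interference at a typical receiver has the law of interference from a PPP of density $\lambda\,p\,\rho$ --- this requires that the non-emptiness indicators of distinct class-$1$ queues act, as seen by the typical node, like an i.i.d.\ thinning. In the high-mobility model the source/destination positions and the fades are resampled every slot independently of the backlogs, which decouples geometry from queue content and is precisely what makes this mean-field identity exact; I would make this rigorous through the same dominant-system / stochastic-dominance coupling used for Proposition~\ref{prop:stab_1user}, observing that in the stationary non-saturated regime the per-slot service indicator has the same distribution as in the saturated (dominant) system with $\lambda\,p$ replaced by $\lambda\,p\,\rho$.
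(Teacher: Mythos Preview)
Your proposal is correct and follows essentially the same route as the paper: both identify the stationary density of active class-$1$ interferers as $\lambda\,p\,\rho$, invoke the same Laplace-functional computation conditioned on $R=r$, decondition over the Rayleigh law to obtain $p_s = \big(1+\phi(\lambda p\rho+\zeta)\big)^{-1}$, and then close the loop via $\rho\,p = a/p_s$ to solve the resulting fixed-point equation. The paper simply states the load relation $\rho = a/(p\,p_s)$ at the outset rather than deriving it from rate conservation, and it omits the discussion you include about why the thinning is Poisson in the high-mobility regime, but the argument is the same.
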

\begin{proof}
	At steady state, the load at each queue is given by $\rho = a/(p\,p_{s})$ and the effective PPP density of active sources is given by $\lambda\,\rho\,p$. Following the same steps as in the proof of Proposition~\ref{prop:stab_1user}, we have that the stationary packet success probability is given by
    \begin{align*}
		p_{s} &= \P(\text{SIR}_{i}>\theta) \\
        	&= \int_0^\infty \P(\text{SIR}_{i}>\theta\mid R_i = r)\,f_R(r) \mathrm{d}r.
    \end{align*}
    Solving the above integral, we find that
    \begin{equation*}
    	p_{s} = \dfrac{1}{1+\phi\,(\lambda\,\rho\,p+\zeta)}
            = \dfrac{1}{1+\phi \left(\lambda\frac{a}{p_{s}}+\zeta\right)}.
    \end{equation*}
    Solving the above equation for $p_{s}$ ends the proof.
\end{proof}

\begin{proposition} \label{prop:delay_1-user}
	The stationary mean packet delay is given by
    \begin{equation*}
    	D = \dfrac{(1-a)(1+\phi\,\zeta)}{p-(1+\phi\,(\lambda\,p+\zeta))\,a},
    \end{equation*}
    which attains a minimum for a medium access probability $p=1$.
\end{proposition}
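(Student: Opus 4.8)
The plan is to identify the buffer of a typical class-1 source as a discrete-time Geo/Geo/1 queue and then apply a discrete-time Little's law. Following the dominant-system argument already used for Propositions~\ref{prop:stab_1user} and~\ref{prop:ps}, in the stationary regime each source is busy with probability $\rho=a/(p\,p_s)$, the active sources of each class form independent PPPs (class~1 with density $\lambda\,\rho\,p$, class $n\ge 2$ with density $\lambda_n p_n$), and, thanks to the high-mobility reallocation and the i.i.d.\ fading, the departure indicators $D_i(t)$ are i.i.d.\ across slots --- Bernoulli with parameter $\mu:=p\,p_s$ whenever $Q_i(t)\ge 1$, with $p_s$ as in Proposition~\ref{prop:ps}. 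Since the $A_i(t)$ are i.i.d.\ Bernoulli$(a)$, independent of the $D_i(t)$, and the access/departure event precedes the arrival event within a slot, $\{Q_i(t)\}$ is a nearest-neighbour (birth--death) chain on $\N$.

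First I would write down the transition probabilities: from $k\ge 1$ the chain moves to $k-1$ with probability $\mu(1-a)$, to $k+1$ with probability $(1-\mu)a$, and stays put otherwise; from $0$ it moves to $1$ with probability $a$. Solving the detailed-balance recursion yields a stationary law that is geometric above $0$, with $\P(Q_i=0)=1-\rho$ and ratio $r=\frac{(1-\mu)a}{\mu(1-a)}$; one has $1-r=\frac{\mu-a}{\mu(1-a)}$, which is positive precisely when \eqref{eq:single_a} holds, so positive recurrence coincides with the stability region. Summing the geometric series gives $\E[Q_i]=\frac{a(1-a)}{\mu-a}$, and Little's law gives $D=\E[Q_i]/a=\frac{1-a}{\mu-a}$. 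Substituting $\mu=p\,p_s$ with $p_s=\frac{1-\phi\lambda a}{1+\phi\zeta}$ from Proposition~\ref{prop:ps} and clearing denominators produces
\[
  D=\frac{(1-a)(1+\phi\zeta)}{p\,(1-\phi\lambda a)-a(1+\phi\zeta)}=\frac{(1-a)(1+\phi\zeta)}{p-(1+\phi(\lambda p+\zeta))\,a},
\]
which is the claimed formula (and makes $D>1$ transparent, consistent with $D\in(1,\infty)$, since $\mu<1$).

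For the optimality of $p=1$, I would regard $D$ as a function of $p$ alone: the numerator $(1-a)(1+\phi\zeta)$ is a positive constant, and the denominator $g(p)=p\,(1-\phi\lambda a)-a(1+\phi\zeta)$ is affine in $p$ with slope $1-\phi\lambda a$. The closure bound $a\le\frac{1}{1+\phi(\lambda+\zeta)}$ forces $\phi\lambda a<1$, so $g$ is strictly increasing, while $g(p)>0$ is exactly the stability condition \eqref{eq:single_a}; hence $D(p)$ is strictly decreasing on the stable range of $p$, and since $p\in(0,1]$ its minimum is attained at $p=1$.

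The delicate point is the first paragraph: one must justify that in steady state the departure process is genuinely i.i.d.\ Bernoulli$(p\,p_s)$ --- i.e.\ that the interfering queues decouple into the mean-field PPP of density $\lambda\,\rho\,p$ --- which is the dominant-system / stochastic-dominance bookkeeping of \cite{stamatiou2010random} reused exactly as in Proposition~\ref{prop:stab_1user}, and one must handle the late-arrival convention carefully so that Little's law returns the delay with the intended normalization. Everything after that is the routine birth--death computation and an elementary monotonicity check.
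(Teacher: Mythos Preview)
Your proof is correct and follows the same route as the paper: both arrive at $D=(1-a)/(p\,p_s-a)$ and then substitute the expression for $p_s$ from Proposition~\ref{prop:ps}. The only difference is that the paper simply cites this delay formula from \cite[Proposition~3]{stamatiou2010random}, whereas you rederive it from scratch via the Geo/Geo/1 birth--death analysis and Little's law; you also spell out the monotonicity argument for the minimum at $p=1$, which the paper leaves implicit.
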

\begin{proof}
	From the proof of \cite[Proposition~3]{stamatiou2010random} we know that the stationary mean delay is given by $D=(1-a)/(p\,p_s-a)$. Then, the result follows directly from Proposition~\ref{prop:ps}.
\end{proof}

\begin{figure}[htb]
	\centering
	\begin{tikzpicture}[scale=0.7]

\begin{axis}
[
  title={},
  legend style={at={(0.05,0.95)}, anchor=north west},
  xlabel={Arrival Rate $a$},
  ylabel={Mean Delay $D$ [slots]}, ylabel style={rotate=0},
  xmin = 0, 	xmax = 0.7,
  ymin = 0, 	ymax = 30,
  grid = both,
  scale only axis,
]
	\legend{$\phi\,\lambda = 2.0$, $\phi\,\lambda = 1.0$, $\phi\,\lambda = 0.5$}
    \addplot[red!80!black, domain=0:0.333, samples=200, thick]
    	{(1-x)/(1-(1+2.0)*x)};
    \addplot[green!60!black, domain=0:0.499, samples=200, thick]
    	{(1-x)/(1-(1+1.0)*x)};
    \addplot[blue!80!black, domain=0:0.666, samples=200, thick]
    	{(1-x)/(1-(1+0.5)*x)};
	\addplot[color=red!40!black, only marks, mark=+,
    	mark options={scale=1.5}] table {
		0.0100	1.0213
		0.0448	1.0901
		0.0796	1.2140
		0.1144	1.3459
        0.1493	1.5202
        0.1841	1.8112
        0.2189	2.3015
        0.2537	3.1497
        0.2885	5.2473
        0.3233	23.3754
		};
	\addplot[color=green!40!black, only marks, mark=+,
    	mark options={scale=1.5}] table {
        0.0100	1.0109
        0.0633	1.0870
        0.1167	1.1584
        0.1700	1.2466
        0.2233	1.4012
        0.2767	1.6125
        0.3300	1.9778
        0.3833	2.6350
        0.4367	4.4295
        0.4900	24.9313
		};
	\addplot[color=blue!40!black, only marks, mark=+,
    	mark options={scale=1.5}] table {
        0.0100	1.0269
        0.0819	1.0374
        0.1537	1.1070
        0.2256	1.1795
        0.2974	1.2631
        0.3693	1.4302
        0.4411	1.6675
        0.5130	2.0945
        0.5848	3.3078
        0.6567	22.1393
		};
    \addplot[blue!80!black, dashed, domain=0:0.6] table
    [
		x expr = \thisrow{a},
    	y expr = \thisrow{0.5}
    ] {./Data/delay_stamatiou.dat};
    \addplot[green!60!black, dashed, domain=0:0.6] table
    [
		x expr = \thisrow{a},
    	y expr = \thisrow{1.0}
    ] {./Data/delay_stamatiou.dat};
    \addplot[red!80!black, dashed, domain=0:0.6] table
    [
		x expr = \thisrow{a},
    	y expr = \thisrow{2.0}
    ] {./Data/delay_stamatiou.dat};
%     \addplot[green!80!black, thick] table
%     [
% 		x expr = \thisrow{D1},
%     	y expr = \thisrow{0.5}
%     ] {./Data/max_a1.dat};
%     \addplot[red!80!black, thick] table
% 	[
% 		x expr = \thisrow{D1},
%     	y expr = \thisrow{0.7}
%     ] {./Data/max_a1.dat};
\end{axis}

\end{tikzpicture}	
	\caption{Delay $D$ as a function of the arrival rate of packets $a$ per time slot, at the optimum medium access probability ($p=1$) and $\zeta = 0$. Simulation results are shown in crosses. Dashed curves correspond to the model presented in \cite{stamatiou2010random}, where $R$ is constant.}
	\label{fig:delay_1user}
\end{figure}
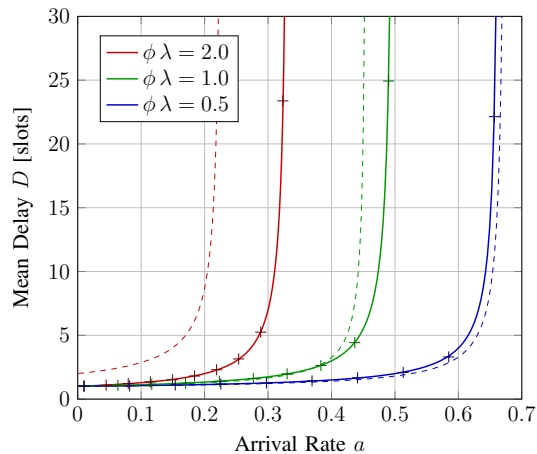

For comparison, $D$ is plotted as a function of $a$ in Fig.~\ref{fig:delay_1user} for the following parameters $\zeta = 0$ (only interference among users of the same class), $\phi\,\lambda = 0.5,1,2$, the medium access probability is chosen such that the delay is minimized ($p=1$) and we also plot the corresponding curves (dashed) for the model where $R$ is constant \cite{stamatiou2010random}. All the physical parameters $\lambda,\alpha,\theta,\overline{R}$ are set to be the same for each couple of curves. It is interesting to notice that, for low values of $\phi\,\lambda$ the variance from $R$ helps in the performance of the system, this is due to the fact that in an overcrowded system the possibility of $R$ having a probability to be small guarantees some successful transmissions, while in the other model, where $R$ is constant, this does not happen. We also performed simulations to ascertain our results (crossed points).
% The parameters for the simulations were: $\alpha = 5$, $\lambda = 1$ and $\overline{R}$ were adjusted according to the value of $\phi\,\lambda$.
The average of links in the simulation were set to be 400.

% % % % % % % % % % % % % % % % % % % % % % % % % % % % % %
\section{Multiple-Class Network}
\label{sec:N-users}

From now on, we consider a network with $N$ classes of users, all with buffer and we assume that the medium access probability for all traffic classes is equal to 1, to simplify the analysis. The motivation for this assumption, as can be seen in Section~\ref{sec:one_user_class}, is that it minimizes the delay and maximizes the stability region\footnote{Stability Region represents all the possible arrival rates, for which the system is stable.} for that case.

%Let us assume that there exists a non-empty set $\cal{R}$, which depends on the network parameters, such that if $\bm{a}=(a_1,a_2,\dots,a_N)\in\cal{R}$, then the system is stable.
The following proposition presents the stationary success probability and delay, when transmitting a packet in a stable network. The results that guarantee stability are presented later in the paper.

\begin{proposition} \label{prop:psk}
	If the network is stable, then the stationary success probability and mean delay for each traffic class $n \in \cal{N}$ are given by
	\begin{align*}
    	p_{s,n} &= \left( 1 + \dfrac{\phi_n}{P_n^\delta}\, 
        \dfrac{\sum_j P_j^\delta\,\lambda_j\,a_j}
        {1 - \sum_j \phi_j\,\lambda_j\,a_j} \right)^{-1},\\
        D_n &= \dfrac{1 - a_n}{p_{s,n} - a_n},
    \end{align*}
    where the sums are taken over the set $\cal{N}$, $\phi_n \triangleq \Gamma(1+\delta)\,\Gamma(1-\delta)\,4\,\overline{R}_n^2\,\theta_n^{2/\alpha}$ and $\delta \triangleq 2/\alpha$.
\end{proposition}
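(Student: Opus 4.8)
The plan is to carry out, simultaneously for all $N$ classes, the same steady-state computation used in the proof of Proposition~\ref{prop:ps}, and then to solve the resulting coupled fixed-point system for the vector $(p_{s,1},\dots,p_{s,N})$. The probabilistic content is essentially that of Section~\ref{sec:one_user_class}; the only genuinely new ingredient is the algebra of that system. First I would invoke the stochastic-dominance / dominant-system construction of \cite{stamatiou2010random} to reduce to the regime in which every queue operates at its stationary load. Under the (assumed) stability, the queue of a class-$j$ source is non-empty with probability $\rho_j = a_j/(p_j\,p_{s,j}) = a_j/p_{s,j}$, since $p_j=1$; hence, by independent thinning of $\Phi_j$, the active class-$j$ sources form a PPP of density $\lambda_j\,\rho_j$. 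The high-mobility displacement model and the slot-wise independence of the arrival streams are what justify treating the interferers seen by a typical class-$n$ receiver as a single Poisson field, with the class-$j$ contribution weighted by its transmit power $P_j$.

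Second, conditioning on the link distance $R_{i,n}=r$ and applying the Poisson-interference Laplace-transform formula \cite[Eq.~(9)]{haenggi2009stochastic} exactly as in \eqref{eq:single_SIR} gives
\[
  \P(\text{SIR}_{i,n} > \theta_n \mid R_{i,n}=r)
  = \exp\!\left( -\pi\,\Gamma(1+\delta)\,\Gamma(1-\delta)\,\theta_n^{\delta}\,r^2 \sum_{j\in\cal{N}} (P_j/P_n)^{\delta}\,\lambda_j\,\rho_j \right).
\]
Deconditioning over $f_{R_n}(r)=2\kappa_n\pi r\,\euler^{-\kappa_n\pi r^2}$ with $\kappa_n=1/(4\overline{R}_n^2)$, using $\int_0^\infty \euler^{-c r^2} f_{R_n}(r)\,\mathrm{d}r = (1+c/(\kappa_n\pi))^{-1}$, collapses the exponent to $(\phi_n/P_n^{\delta})\sum_j P_j^{\delta}\lambda_j\rho_j$, and substituting $\rho_j=a_j/p_{s,j}$ yields the fixed-point system
\[
  p_{s,n} = \left( 1 + \frac{\phi_n}{P_n^{\delta}} \sum_{j\in\cal{N}} P_j^{\delta}\,\lambda_j\,\frac{a_j}{p_{s,j}} \right)^{-1}, \qquad n\in\cal{N}.
\]

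Third — the crux — I would solve this system. Its right-hand side depends on $n$ only through the class factor $\phi_n/P_n^{\delta}$ multiplying the common scalar $S \triangleq \sum_{j\in\cal{N}} P_j^{\delta}\,\lambda_j\,a_j/p_{s,j}$; hence $1/p_{s,j} = 1 + (\phi_j/P_j^{\delta})\,S$, and re-substituting this into the definition of $S$ produces the single linear equation $S = \sum_j P_j^{\delta}\lambda_j a_j + S\sum_j \phi_j\lambda_j a_j$, whence $S = \big(\sum_j P_j^{\delta}\lambda_j a_j\big)\big/\big(1-\sum_j \phi_j\lambda_j a_j\big)$. Plugging this back into the fixed-point relation gives exactly the stated expression for $p_{s,n}$; stability — characterized later in the paper — is precisely what guarantees $\sum_j \phi_j\lambda_j a_j < 1$, so the denominator is positive and $S$ is the admissible (unique non-negative) root. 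The delay identity then follows directly from \cite[Proposition~3]{stamatiou2010random}: with $p_n=1$ each buffer is a discrete-time queue with per-slot arrival probability $a_n$ and per-slot departure probability $p_n\,p_{s,n}=p_{s,n}$, whose stationary mean sojourn time is $(1-a_n)/(p_{s,n}-a_n)$.

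The step I expect to be the main obstacle is not the computation but its justification: because the queues are coupled through the interference, one must argue that the substitution $\rho_j = a_j/p_{s,j}$ holds with equality in the stationary regime rather than merely as a mean-field approximation. This is handled as in \cite{stamatiou2010random}, by sandwiching the true system between dominant systems and using that, under the assumed stability, the stationary loads are pinned down by the flow-balance relation (arrival rate equals load times service rate); everything downstream is then a routine Laplace-transform evaluation and the solution of one linear equation.
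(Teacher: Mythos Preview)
Your proposal is correct and follows essentially the same approach as the paper: both derive the fixed-point system \(p_{s,n}^{-1}=1+(\phi_n/P_n^{\delta})\sum_j P_j^{\delta}\lambda_j a_j/p_{s,j}\) from the stationary-load thinning plus the Laplace-transform/Rayleigh deconditioning, and both exploit that the sum on the right is a single scalar independent of \(n\). The only cosmetic difference is that the paper rearranges to the pairwise invariant \(\tfrac{P_j^{\delta}}{\phi_j}\tfrac{1-p_{s,j}}{p_{s,j}}=\tfrac{P_n^{\delta}}{\phi_n}\tfrac{1-p_{s,n}}{p_{s,n}}\) and substitutes each \(p_{s,j}\) back, whereas you name the common scalar \(S\) and solve one linear equation for it---the two routes are equivalent and yield the same closed form.
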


\begin{proof}
	The delay follows directly from the proof of \cite[Proposition~3]{stamatiou2010random}. At steady state, we have that the effective PPP density of active sources for each traffic class is $\lambda_n\,a_n/p_{s,n}$. Then, using the result from \cite[Eq.~(9)]{haenggi2009stochastic}, one can show that
    \begin{align} \label{eq:aux_psk}
    	p_{s,n} &= \P(\text{SIR}_{i,n}>\theta_n) \nonumber\\
        	&= \int_0^\infty \P(\text{SIR}_{i,n}>\theta_n
            \mid R_{i,n}(t) = r)\,f_R(r)\,\mathrm{d}r \nonumber\\
            &= \left( 1 + \dfrac{\phi_n}{P_n^\delta} 
        \sum_j P_j^\delta \dfrac{\lambda_j\,a_j}{p_{s,j}} \right)^{-1},
    \end{align}
    which can be rearranged as
    \begin{equation} \label{eq:psk_equivalence}
    	\dfrac{P_n^\delta}{\phi_n} \left( \dfrac{1-p_{s,n}}{p_{s,n}}\right) = 
        \sum_j P_j^\delta\dfrac{\lambda_j\,a_j}{p_{s,j} } .
    \end{equation}
    Note that the right-hand side does not depend on $n$. Then, for all $j$, we can write
    \begin{equation} \label{eq:Pi_Pk}
    	\dfrac{P_j^\delta}{\phi_j} \left( \dfrac{1-p_{s,j}}{p_{s,j}}\right) = 
        \dfrac{P_n^\delta}{\phi_n} \left( \dfrac{1-p_{s,n}}{p_{s,n}}\right).
    \end{equation}
    For each $j$, we can solve the above equation for $p_{s,j}$ and plug into the sum of Eq.~\eqref{eq:aux_psk}. Then, we can solve it for $p_{s,n}$, which ends the proof.
\end{proof}

\begin{lemma} \label{lem:identity_1}
	If the network is stable, then the following identity holds (at steady state),
    \begin{equation*}
    	\sum_{n\in\cal{N}} \phi_n\,\lambda_n\,\dfrac{D_n}{D_n-1}\,\dfrac{a_n}{1-a_n} = 1.
    \end{equation*}
    Furthermore,
    \begin{equation*}
    	\dfrac{\phi_j}{P_j^\delta}
        \left( \dfrac{D_j}{D_j-1}\,\dfrac{1}{1-a_j} - 1\right) =
        \dfrac{\phi_k}{P_k^\delta} \left( \dfrac{D_k}{D_k-1}\,
        \dfrac{1}{1-a_k} - 1 \right)
        \quad \forall\,j,k\in\cal{N}.
    \end{equation*}
\end{lemma}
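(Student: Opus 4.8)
The plan is to reduce both identities to relation~\eqref{eq:psk_equivalence} established inside the proof of Proposition~\ref{prop:psk}. Write $S \triangleq \sum_{j\in\cal{N}} P_j^\delta\,\lambda_j\,a_j/p_{s,j}$. Equation~\eqref{eq:psk_equivalence} says precisely that $\frac{P_n^\delta}{\phi_n}\cdot\frac{1-p_{s,n}}{p_{s,n}} = S$ for every $n\in\cal{N}$ (the right-hand side being the quantity noticed there not to depend on $n$), equivalently
\[ \frac{p_{s,n}}{1-p_{s,n}} = \frac{P_n^\delta}{\phi_n\,S}, \qquad \frac{1}{1-p_{s,n}} = 1 + \frac{P_n^\delta}{\phi_n\,S}. \]

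Next I would invert the delay formula $D_n = (1-a_n)/(p_{s,n}-a_n)$ of Proposition~\ref{prop:psk}. Solving for $p_{s,n}$ gives $p_{s,n} = \frac{1+(D_n-1)a_n}{D_n}$, hence $1-p_{s,n} = \frac{(D_n-1)(1-a_n)}{D_n}$, i.e. the ``dictionary''
\[ \frac{1}{1-p_{s,n}} = \frac{D_n}{D_n-1}\cdot\frac{1}{1-a_n}. \]
Stability forces $p_{s,n}>a_n$, hence $D_n>1$ and $p_{s,n}<1$, and $a_n\in(0,1)$, so every denominator occurring here ($D_n-1$, $1-a_n$, $p_{s,n}$, $1-p_{s,n}$, $S$) is strictly positive; this is the only place where the stability hypothesis is used.

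For the second identity, subtract $1$ from the dictionary and feed in the reformulation of~\eqref{eq:psk_equivalence}:
\[ \frac{D_n}{D_n-1}\cdot\frac{1}{1-a_n} - 1 = \frac{1}{1-p_{s,n}} - 1 = \frac{p_{s,n}}{1-p_{s,n}} = \frac{P_n^\delta}{\phi_n\,S}, \]
so $\frac{\phi_n}{P_n^\delta}\bigl(\frac{D_n}{D_n-1}\frac{1}{1-a_n}-1\bigr) = 1/S$ for all $n$; this common value is manifestly independent of $n$, so evaluating it at $n=j$ and at $n=k$ yields the claim. For the first identity, multiply the dictionary by $\phi_n\lambda_n a_n$ and write $\frac{1}{1-p_{s,n}} = \frac{1}{p_{s,n}}\cdot\frac{p_{s,n}}{1-p_{s,n}} = \frac{1}{p_{s,n}}\cdot\frac{P_n^\delta}{\phi_n S}$ to get
\[ \phi_n\,\lambda_n\,\frac{D_n}{D_n-1}\,\frac{a_n}{1-a_n} = \frac{1}{S}\,P_n^\delta\,\frac{\lambda_n\,a_n}{p_{s,n}}; \]
summing over $n\in\cal{N}$ and recalling the definition of $S$ collapses the right-hand side to $\frac{1}{S}\cdot S = 1$.

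Everything is routine algebra once the dictionary is in place, so I do not expect a genuine obstacle. The mild point worth getting right is to route the first identity through $\frac{a_n}{1-p_{s,n}} = \frac{1}{p_{s,n}}\cdot\frac{P_n^\delta}{\phi_n S}$ (so that the definition of $S$ does the telescoping) rather than through the explicit closed form of $p_{s,n}$ from Proposition~\ref{prop:psk}, which keeps the final summation a one-liner.
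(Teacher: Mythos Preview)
Your proof is correct and follows essentially the same route as the paper: both invert the delay formula of Proposition~\ref{prop:psk} to obtain $\frac{1}{1-p_{s,n}} = \frac{D_n}{D_n-1}\frac{1}{1-a_n}$, then feed this into Eq.~\eqref{eq:psk_equivalence} (your $S$ is exactly the right-hand side there) so that each summand becomes $P_n^\delta\lambda_n a_n / (p_{s,n} S)$, which sums to $1$; the second identity likewise reduces to Eq.~\eqref{eq:Pi_Pk}. Your explicit check that stability forces all denominators to be positive is a nice addition the paper leaves implicit.
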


\begin{proof}
	We start with the terms of the sum,
	\begin{align*}
		\phi_n\,\lambda_n\,\dfrac{D_n}{D_n-1}\,\dfrac{a_n}{1-a_n}
        &\stackrel{\text{(i)}}{=} \phi_n\lambda_n\dfrac{a_n}{1-p_{s,n}}\\ 
		&= P_n^\delta \dfrac{\lambda_n\,a_n}{p_{s,n}} \left( 
        \dfrac{\phi_n}{P_n^\delta} \dfrac{p_{s,n}}{1-p_{s,n}} \right)\\
        &\stackrel{\text{(ii)}}{=} \dfrac{ P_n^\delta\dfrac{\lambda_n\,a_n}{p_{s,n}} }
        { \sum_j P_j^\delta\dfrac{\lambda_j\,a_j}{p_{s,j}} },
	\end{align*}
    where (i) comes from Proposition~\ref{prop:psk} and (ii) comes from Eq.~\eqref{eq:psk_equivalence}. Summing over $\cal{N}$ ends the proof of the first identity. For the second relation, we use Proposition~\ref{prop:psk} once again to find that
    \begin{equation*}
    	\dfrac{\phi_n}{P_n^\delta} \left( \dfrac{D_n}{D_n-1}\,\dfrac{1}{1-a_n} - 1\right) = \dfrac{\phi_n}{P_n^\delta} \dfrac{p_{s,n}}{1-p_{s,n}}.
    \end{equation*}
    Comparing this expression with Eq.~\eqref{eq:Pi_Pk} ends the proof.
\end{proof}

	Lemma~\ref{lem:identity_1} is an elegant form to see that a channel is a limited resource regarding traffic density and delay. Let us rewrite the identity in terms of physical parameters,
    \begin{equation} \label{eq:physical}
    	\sum_{n=1}^N 4\,\lambda_n\,\overline{R}_n^2\,\theta_n^{2/\alpha}\,\dfrac{D_n}{D_n-1}\,\dfrac{a_n}{1-a_n} = \dfrac{\sin(2\pi/\alpha)}{2\pi/\alpha},
    \end{equation}
    where we used Euler's reflection formula. Note that $\frac{a_n}{1-a_n}$ and $\frac{\sin(2\pi/\alpha)}{2\pi/\alpha}$ are monotonic increasing functions and $\frac{D_n}{D_n-1}$ is a monotonic decreasing function. The right hand-side of Eq.~\eqref{eq:physical} can be seen as a resource available to the users of the channel. The larger the path loss exponent $\alpha$, the larger (smaller) the terms $\lambda_n$, $\overline{R}_n$, $\theta_n$, $a_n$ ($D_n$) can be. A possible modification is to make a direct exchange between decreasing the delay $D_n$ and decreasing the arrival rate of packets $a_n$ (by controlling the ratio of transmit power levels as it is showed in Section~\ref{ssec:cel_D2D}), such that the term $\frac{D_n}{D_n-1}\,\frac{a_n}{1-a_n}$ remains constant; or else increase the arrival rate of packets and decrease the number of users, such that the term $\lambda_n\,\frac{a_n}{1-a_n}$ remains constant; we can also exchange quantities among the terms of traffic class $k$ and $\ell$, such that the sum $\lambda_k\,\overline{R}_k^2\,\theta_k^{2/\alpha}\,\frac{D_k}{D_k-1}\,\frac{a_k}{1-a_k}+\lambda_\ell\,\overline{R}_n^2\,\theta_n^{2/\alpha}\,\frac{D_\ell}{D_\ell-1}\,\frac{a_\ell}{1-a_\ell}$ remains constant; and so on.

\begin{lemma} \label{lem:sufficient}
	A necessary and sufficient condition for the network stability is that $\bm{a} \in \bigcup_{\nu\in\cal{P}} \cal{S}_\nu$, where $\cal{P}$ is the space of all bijective functions from $\cal{N}$ to $\cal{N}
    $ and
    \begin{equation*}
    	\cal{S}_{\nu} = \left\lbrace \bm{a} \in [0,1]^N \bigm\vert
        \dfrac{\phi_{\nu(n)}}{P_{\nu(n)}^\delta}\,\dfrac{a_{\nu(n)}}{1-a_{\nu(n)}} < 
        \dfrac{ 1 - \sum_{k=1}^{n-1} \phi_{\nu(k)}\,\lambda_{\nu(k)}\,a_{\nu(k)} }
        {\sum_{k=1}^{n-1} P_{\nu(k)}^{\delta}\,\lambda_{\nu(k)}\,a_{\nu(k)} +
        \sum_{k=n}^N P_{\nu(k)}^\delta\,\lambda_{\nu(k)}}        
        ~ \forall n \in \cal{N} \right\rbrace,
    \end{equation*}
    with the convention $\sum_{k=1}^0 \cdot = 0$.
\end{lemma}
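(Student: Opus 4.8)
The plan is to combine the stochastic dominance technique with the closed-form computations already carried out in Propositions~\ref{prop:stab_1user} and~\ref{prop:psk}. Following \cite{stamatiou2010random} (and, for the dominance idea, \cite{Ephremides1988}), I would attach to each permutation $\nu\in\cal{P}$ the nested family of \emph{dominant} networks $\cal{D}_0^\nu,\dots,\cal{D}_N^\nu$, where in $\cal{D}_m^\nu$ the classes $\nu(1),\dots,\nu(m)$ keep their true buffers while $\nu(m+1),\dots,\nu(N)$ transmit dummy packets forever; thus $\cal{D}_N^\nu$ is the network of interest and $\cal{D}_0^\nu$ the fully saturated one. Since the queues of the already-retained classes can only shrink as dummy traffic is removed, the coupling of these networks reduces the claim to: \emph{the network is stable iff there is a $\nu$ for which, for every $n\in\cal{N}$, the $\nu(n)$-th buffer is stable in $\cal{D}_n^\nu$}. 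By the one-buffer criterion used in Proposition~\ref{prop:stab_1user} ($\E[A_{\nu(n)}]<\E[D_{\nu(n)}]$, evaluated with class $\nu(n)$ itself saturated), this last condition is the same as $a_{\nu(n)}$ being below the service rate of class $\nu(n)$ in $\cal{D}_{n-1}^\nu$. Write $\cal{R}_\nu$ for the set of arrival vectors satisfying all $N$ of these inequalities; it then remains to prove $\cal{R}_\nu=\cal{S}_\nu$ for each $\nu$, which is where the computation enters.

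Fix $\nu$ and relabel so $\nu=\mathrm{id}$. In $\cal{D}_m$ the active sources of a retained class $k\le m$ form, in steady state and by the high-mobility model, a PPP of density $\lambda_k\rho_k$ with $\rho_k=a_k/p^{(m)}_{s,k}$, while every saturated class $k>m$ contributes density $\lambda_k$; deconditioning \cite[Eq.~(9)]{haenggi2009stochastic} over the Rayleigh link distance exactly as in Proposition~\ref{prop:psk} gives $p^{(m)}_{s,k}=\bigl(1+\tfrac{\phi_k}{P_k^\delta}S^{(m)}\bigr)^{-1}$, where the interference aggregate $S^{(m)}:=\sum_{k\le m}P_k^\delta\lambda_k\rho_k+\sum_{k>m}P_k^\delta\lambda_k$ is common to all users. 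Substituting $\rho_k=a_k\bigl(1+\tfrac{\phi_k}{P_k^\delta}S^{(m)}\bigr)$ and solving the resulting linear equation for $S^{(m)}$ yields
\[
S^{(m)}=\frac{\sum_{k\le m}P_k^\delta\lambda_k a_k+\sum_{k>m}P_k^\delta\lambda_k}{1-\sum_{k\le m}\phi_k\lambda_k a_k},
\]
valid whenever $\sum_{k\le m}\phi_k\lambda_k a_k<1$. The stability inequality for buffer $n$, namely $a_n<\bigl(1+\tfrac{\phi_n}{P_n^\delta}S^{(n-1)}\bigr)^{-1}$ (the case $n=1$ is Proposition~\ref{prop:stab_1user} with $p=1$), rearranges in one step into precisely the level-$n$ inequality defining $\cal{S}_\nu$, because $1/S^{(n-1)}$ equals the fraction on its right-hand side. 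An induction on $m$ then finishes: the first $m-1$ level conditions force $\sum_{k\le m-1}\phi_k\lambda_k a_k<1$ (so $S^{(m-1)}$ is well defined) and, through the coupling, make buffers $\nu(1),\dots,\nu(m-1)$ stable in $\cal{D}_{m-1}$, after which the Proposition~\ref{prop:stab_1user}-type criterion applied to $\nu(m)$ closes the step and gives $\cal{R}_\nu=\cal{S}_\nu$.

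To handle the union I would reformulate it. For $T\subseteq\cal{N}$ set $S(T):=\bigl(\sum_{j\in T}P_j^\delta\lambda_j a_j+\sum_{j\notin T}P_j^\delta\lambda_j\bigr)/\bigl(1-\sum_{j\in T}\phi_j\lambda_j a_j\bigr)$; then the level-$n$ condition of $\cal{S}_\nu$ is exactly $\tfrac{\phi_{\nu(n)}}{P_{\nu(n)}^\delta}\tfrac{a_{\nu(n)}}{1-a_{\nu(n)}}\,S(T_{n-1})<1$ with $T_{n-1}=\{\nu(1),\dots,\nu(n-1)\}$, and a short computation shows this is also exactly the condition under which adjoining $\nu(n)$ makes $S$ strictly decrease, i.e.\ $S(T\cup\{c\})<S(T)\iff\tfrac{\phi_c}{P_c^\delta}\tfrac{a_c}{1-a_c}S(T)<1$ for $c\notin T$. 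Consequently $\bm{a}\in\bigcup_{\nu}\cal{S}_\nu$ iff $\sum_j\phi_j\lambda_j a_j<1$ and $a_n<p_{s,n}$ for all $n$, with $p_{s,n}=\bigl(1+\tfrac{\phi_n}{P_n^\delta}S(\cal{N})\bigr)^{-1}$ as in Proposition~\ref{prop:psk}: the ``only if'' holds because along any valid chain $S$ decreases monotonically down to $S(\cal{N})$, so every level condition is weaker than $a_{\nu(n)}<p_{s,\nu(n)}$; the ``if'' is a greedy construction, starting from $T_0=\emptyset$ and repeatedly adjoining any class $c$ with $\tfrac{\phi_c}{P_c^\delta}\tfrac{a_c}{1-a_c}S(T)<1$, noting that if this ever stalls at some $T\subsetneq\cal{N}$ then summing the reverse inequalities over $c\notin T$ and comparing numerators and denominators forces $S(\cal{N})\ge S(T)$, whence $a_c\ge p_{s,c}$ for such a $c$, contradicting the hypothesis. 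Combined with the elementary observation (steady-state balance $\rho_n=a_n/p_{s,n}<1$, as in the proof of Proposition~\ref{prop:psk}) that a stable network does satisfy $\sum_j\phi_j\lambda_j a_j<1$ and $a_n<p_{s,n}$, this settles both directions: stability $\Rightarrow$ membership in the reformulated set $\Rightarrow$ $\bm{a}\in\bigcup_\nu\cal{S}_\nu$, and $\bm{a}\in\cal{S}_\nu$ $\Rightarrow$ stability via the induction of the previous paragraph.

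The point I expect to be genuinely delicate is not the algebra---the fixed-point solution and the greedy argument are routine---but the stochastic-dominance reduction of the first paragraph for $N$ interacting classes: that a stable network inherits the stability of the buffers peeled in the order $\nu$; that a retained buffer's stationary load is unchanged once the dummy traffic above it is removed (the coupling-after-regeneration argument); that the PPP thinning to effective density $\lambda_k\rho_k$ is exact under the high-mobility model of \cite{baccelli2010stochastic}; and, the Rao--Ephremides/Luo--Ephremides point, that the union over all orderings is not merely contained in but equal to the stability region in the sense of \cite{szpankowski1994stability}. For these I would lean on the constructions in \cite{stamatiou2010random,szpankowski1994stability,baccelli2010stochastic}; the greedy argument above is precisely what discharges the ``union equals region'' part on the combinatorial side.
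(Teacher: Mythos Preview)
Your sufficiency direction is essentially identical to the paper's: nested dominant networks indexed by a permutation, the fixed-point computation of the aggregate interference $S^{(m)}$ in each partially-saturated system (yielding exactly the closed form you wrote), and the induction that turns the Loynes-type condition $a_{\nu(n)}<\bigl(1+\tfrac{\phi_{\nu(n)}}{P_{\nu(n)}^\delta}S^{(n-1)}\bigr)^{-1}$ into the level-$n$ inequality of $\cal{S}_\nu$. The paper carries out precisely this calculation in Appendix~A.

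Where you diverge is the necessity direction. The paper handles it in one sentence via indistinguishability (``if the dominant network is unstable, then the original network will be unstable as well''), leaving the $N$-class Rao--Ephremides union argument implicit. You instead route through the steady-state formulas of Proposition~\ref{prop:psk} (and implicitly Lemma~\ref{lem:identity_1}) to get $\sum_j\phi_j\lambda_j a_j<1$ and $a_n<p_{s,n}$, and then run a greedy construction to exhibit a valid ordering $\nu$. Your monotonicity claim $S(T\cup\{c\})<S(T)\iff \tfrac{\phi_c}{P_c^\delta}\tfrac{a_c}{1-a_c}S(T)<1$ is correct, and the ``stall'' contradiction (sum the reverse inequalities over $c\notin T$, multiply by $\lambda_c(1-a_c)$, and compare with the definition of $S(\cal{N})$) does force $S(\cal{N})\ge S(T)$ and hence $a_c\ge p_{s,c}$ for some $c$. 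This is a cleaner and more complete necessity argument than the paper's; in fact your third paragraph is already most of the paper's proof of Theorem~\ref{th:stability} (Appendix~B), which establishes $\bigcup_\nu\cal{S}_\nu=\cal{R}$ by a related but more circuitous route through the delay parametrization $\cal{R}'$. So your proposal is correct, with sufficiency matching the paper and necessity taking a genuinely different---and tighter---combinatorial path that folds the Lemma and the Theorem together.
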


\begin{proof}
	See Appendix~\ref{app:lemm_stab}.
\end{proof}

The following theorem presents a simple form of stating Lemma~\ref{lem:sufficient} and relates (in the proof) the stability condition with the stationary mean delay in Lemma~\ref{lem:identity_1}.

\begin{theorem} \label{th:stability}
	The system network is stable if and only if $\bm{a}\in\cal{R}$, where
    \begin{align*}
    	\cal{R}
        &\triangleq \left\lbrace \bm{a}\in[0,1]^N \bigm\vert 
        \dfrac{\phi_n}{P_n^\delta}\,\dfrac{a_n}{1-a_n} < 
        \dfrac{1 - \sum_{k} \phi_k\,\lambda_k\,a_k}
        {\sum_{k} P_k^{\delta}\,\lambda_k\,a_k}
        \quad \forall n \in \cal{N}  \right\rbrace \\
        &= \left\lbrace \bm{a}\in[0,1]^N \bigm\vert 
        \dfrac{\phi_n}{P_n^\delta}\,\dfrac{a_n}{1-a_n} < 
        \dfrac{1 - \sum_{k \neq n} \phi_k\,\lambda_k\,a_k}
        {P_n^\delta\,\lambda_n + \sum_{k \neq n} P_k^{\delta}\,\lambda_k\,a_k}
        \quad \forall n \in \cal{N}  \right\rbrace.
    \end{align*}
    % Furthermore, the mean stationary delay for each user is finite, i.e., $D_j<\infty\quad\forall j\in\cal{N}$.
\end{theorem}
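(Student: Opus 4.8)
The plan is to deduce Theorem~\ref{th:stability} from Lemma~\ref{lem:sufficient}. That lemma already characterises stability as $\bm a\in\bigcup_{\nu\in\cal P}\cal S_\nu$, so what remains are the two purely algebraic facts that the two displayed descriptions of $\cal R$ coincide and that $\bigcup_{\nu\in\cal P}\cal S_\nu=\cal R$. The trivial vector $\bm a=\bm 0$ lies in $\cal R$ by the second description and in every $\cal S_\nu$, so assume $\bm a\neq\bm 0$. Write $x_n\triangleq\frac{\phi_n}{P_n^\delta}\frac{a_n}{1-a_n}$ and $T\triangleq\frac{1-\sum_k\phi_k\lambda_k a_k}{\sum_k P_k^\delta\lambda_k a_k}$; then the first description of $\cal R$ reads ``$x_n<T$ for all $n\in\cal N$'', and (since $x_n\ge 0$ and the denominator of $T$ is positive) this already forces $\sum_k\phi_k\lambda_k a_k<1$. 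The equivalence of the two forms of $\cal R$ is one line: the identity $x_n\,P_n^\delta\lambda_n(1-a_n)=\phi_n\lambda_n a_n$ lets one move the $n$-th summand from the numerator correction into the denominator without changing the inequality $x_n<(\cdot)$.

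For a fixed bijection $\nu$ introduce the partial-sum thresholds, $n=0,1,\dots,N$,
\[
 T_n^\nu\triangleq\frac{1-\sum_{k=1}^{n}\phi_{\nu(k)}\lambda_{\nu(k)}a_{\nu(k)}}{\sum_{k=1}^{n}P_{\nu(k)}^\delta\lambda_{\nu(k)}a_{\nu(k)}+\sum_{k=n+1}^{N}P_{\nu(k)}^\delta\lambda_{\nu(k)}},
\]
so that $T_N^\nu=T$ for every $\nu$, and the defining inequalities of $\cal S_\nu$ are exactly $x_{\nu(n)}<T_{n-1}^\nu$, $n\in\cal N$. Applying the same ``move the $n$-th term'' manipulation to $T_{n-1}^\nu$ — writing $T_{n-1}^\nu=\frac{p+x_{\nu(n)}r}{q+r}$, where $p,q$ are the numerator and denominator of $T_n^\nu$ and $r=P_{\nu(n)}^\delta\lambda_{\nu(n)}(1-a_{\nu(n)})>0$ — gives, for each $n$, the equivalences
\[
 x_{\nu(n)}<T_{n-1}^\nu \;\Longleftrightarrow\; x_{\nu(n)}<T_n^\nu \;\Longleftrightarrow\; T_{n-1}^\nu<T_n^\nu .
\]
These are the only non-routine computations; the rest is bookkeeping.

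From the last equivalence, $\bm a\in\cal S_\nu$ forces $T_0^\nu<T_1^\nu<\dots<T_N^\nu=T$, whence by the middle equivalence $x_{\nu(n)}<T_n^\nu\le T$ for all $n$; this is $\bm a\in\cal R$, giving $\bigcup_\nu\cal S_\nu\subseteq\cal R$. For the reverse inclusion, given $\bm a\in\cal R$ I relabel so that $x_{\nu(1)}\le x_{\nu(2)}\le\dots\le x_{\nu(N)}$ and verify $\bm a\in\cal S_\nu$ directly, by checking
\[
 x_{\nu(n)}\Bigl(\textstyle\sum_{k<n}P_{\nu(k)}^\delta\lambda_{\nu(k)}a_{\nu(k)}+\sum_{k\ge n}P_{\nu(k)}^\delta\lambda_{\nu(k)}\Bigr)<1-\textstyle\sum_{k<n}\phi_{\nu(k)}\lambda_{\nu(k)}a_{\nu(k)}
\]
for every $n$: on the left, each term with $k<n$ is bounded using $x_{\nu(n)}<T$, and each term with $k\ge n$ is bounded using first the ordering $x_{\nu(n)}\le x_{\nu(k)}$ and then the identity $x_{\nu(k)}P_{\nu(k)}^\delta\lambda_{\nu(k)}=\phi_{\nu(k)}\lambda_{\nu(k)}a_{\nu(k)}+x_{\nu(k)}P_{\nu(k)}^\delta\lambda_{\nu(k)}a_{\nu(k)}$ followed again by $x_{\nu(k)}<T$; summing and inserting $T\sum_k P_k^\delta\lambda_k a_k=1-\sum_k\phi_k\lambda_k a_k$ makes the bound telescope to precisely the right-hand side above. (All denominators are positive because every $\lambda_n,P_n>0$.)

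The telescoping step is where I expect the only real care to be needed: the inequality it produces is \emph{tight}, so the classes must be ordered by $x_n$ and the sum split at the running index in exactly this way, or slack is lost and the estimate fails; every other step is formula-pushing. Finally, this argument also exhibits the link to Lemma~\ref{lem:identity_1} promised before the statement: at the stationary point of Proposition~\ref{prop:psk} one has $\frac{\phi_n}{P_n^\delta}\frac{p_{s,n}}{1-p_{s,n}}=T$ for all $n$ — the common value in Lemma~\ref{lem:identity_1} — so ``$x_n<T$'' is equivalent to ``$p_{s,n}>a_n$'', i.e.\ to $D_n=\frac{1-a_n}{p_{s,n}-a_n}$ being finite; thus $\cal R$ is exactly the set of arrival vectors for which all stationary delays of Proposition~\ref{prop:psk} are finite.
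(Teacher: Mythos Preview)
Your proof is correct and takes a genuinely different route from the paper's. The paper introduces an auxiliary set $\cal R'$ built from the delay identity of Lemma~\ref{lem:identity_1} (parametrised over all $\bm D\in(1,\infty)^N$), first shows $\cal R=\cal R'$ by manipulating that identity, and then proves $\cal R'\subset\bigcup_\nu\cal S_\nu$ by an iterated contradiction argument: at each step $j$ one assumes no remaining class satisfies the $j$-th threshold inequality and derives $1>1$ from the identity. The reverse inclusion is argued informally from the fact that $\cal R'$ already exhausts all admissible delay vectors. Your argument, by contrast, is purely order-theoretic: the mediant relation $T_{n-1}^\nu=\frac{p+x_{\nu(n)}r}{q+r}$ makes the three-way equivalence $x_{\nu(n)}<T_{n-1}^\nu\Leftrightarrow x_{\nu(n)}<T_n^\nu\Leftrightarrow T_{n-1}^\nu<T_n^\nu$ transparent, which immediately gives $\bigcup_\nu\cal S_\nu\subset\cal R$ via the monotone chain $T_0^\nu<\cdots<T_N^\nu=T$; for the reverse inclusion you exhibit a \emph{specific} permutation (the one sorting the $x_n$) and verify membership by the telescoping estimate. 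This is more elementary and self-contained---it never invokes Lemma~\ref{lem:identity_1} or Proposition~\ref{prop:psk} in the logical chain---whereas the paper's detour through $\cal R'$ has the pedagogical advantage of making the delay identity the conceptual centrepiece and thereby motivating the second characterisation of $\cal R$ in the statement. Your closing remark recovers that interpretation \emph{a posteriori}: the observation that $\frac{\phi_n}{P_n^\delta}\frac{p_{s,n}}{1-p_{s,n}}=T$ for every $n$ is exactly the common value appearing in Lemma~\ref{lem:identity_1}, so ``$x_n<T$'' is ``$a_n<p_{s,n}$'', tying the algebraic stability region back to finiteness of the stationary delays.
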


\begin{proof}
	See Appendix~\ref{app:th_stab}. The proof simply shows that $\cal{R} = \bigcup_{\nu\in\cal{P}} \cal{S}_\nu$.
\end{proof}

From the proof of Theorem~\ref{th:stability}, it is clear that for an arbitrary choice of the stationary mean delays $\bm{D}\in(1,\infty)^N$, it is possible to determine a vector of arrival rates $\bm{a}\in\cal{R}$, such that the specified mean delays are achieved.

The following corollary establishes a simple stability result, which will be useful in the Section~\ref{sec:application}, where we deal with an optimization problem regarding the transmit powers.

\begin{corollary} \label{cor:stab}
	There exist $P_1,P_2,\dots,P_N \in \R_+$ such that the network is stable if and only if
    \begin{equation*}
    	\sum_{n\in\cal{N}} \phi_n\,\lambda_n\,\dfrac{a_n}{1-a_n} < 1.
    \end{equation*}
\end{corollary}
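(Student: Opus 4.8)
The plan is to derive this from Theorem~\ref{th:stability} by treating the powers $P_1,\dots,P_N$ as free parameters and asking when the region $\cal{R}$ is nonempty for \emph{some} choice of them — in fact, when the given $\bm{a}$ lies in $\cal{R}$ for some $\bm{P}$. Writing $x_n \triangleq P_n^\delta > 0$, the defining inequalities of $\cal{R}$ read
\begin{equation*}
  \phi_n\,\dfrac{a_n}{1-a_n}\,\dfrac{1}{x_n} < \dfrac{1 - \sum_k \phi_k\,\lambda_k\,a_k}{\sum_k x_k\,\lambda_k\,a_k}
  \qquad \forall\,n \in \cal{N}.
\end{equation*}
Note first that a necessary condition for any of these to hold is $1 - \sum_k \phi_k\,\lambda_k\,a_k > 0$ (the right-hand side must be positive, since the left-hand side is positive and $x_k,\lambda_k,a_k \ge 0$), which is exactly the claimed inequality. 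So one direction — necessity — is essentially immediate from Theorem~\ref{th:stability}.

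For sufficiency, I would assume $\sum_n \phi_n\,\lambda_n\,\frac{a_n}{1-a_n} < 1$ and exhibit a choice of powers making all $N$ inequalities hold simultaneously. The natural guess, suggested by the equality case of Lemma~\ref{lem:identity_1} and Eq.~\eqref{eq:Pi_Pk}, is to pick $\bm{P}$ so that the quantities $\frac{\phi_n}{P_n^\delta}\,\frac{a_n}{1-a_n}$ are all equal to a common value $c$; that is, set $x_n = P_n^\delta = \phi_n\,\frac{a_n}{1-a_n}/c$ for a constant $c>0$ to be chosen. With this substitution, $\sum_k x_k\,\lambda_k\,a_k = \frac{1}{c}\sum_k \phi_k\,\lambda_k\,a_k\,\frac{a_k}{1-a_k}$, and every one of the $N$ inequalities collapses to the single condition
\begin{equation*}
  c < \dfrac{c\,\bigl(1 - \sum_k \phi_k\,\lambda_k\,a_k\bigr)}{\sum_k \phi_k\,\lambda_k\,a_k\,\frac{a_k}{1-a_k}},
  \quad\text{i.e.}\quad
  \sum_k \phi_k\,\lambda_k\,\dfrac{a_k^2}{1-a_k} < 1 - \sum_k \phi_k\,\lambda_k\,a_k,
\end{equation*}
which rearranges to $\sum_k \phi_k\,\lambda_k\,\frac{a_k}{1-a_k}\bigl(a_k + (1-a_k)\bigr) < 1$, i.e. $\sum_k \phi_k\,\lambda_k\,\frac{a_k}{1-a_k} < 1$ — precisely the hypothesis. (One should also check the degenerate cases: if some $a_n = 0$ the corresponding inequality is trivially satisfied and that class can be dropped from the construction; if some $a_n = 1$ the hypothesis sum diverges, so this never arises.) Hence the hypothesis guarantees a valid $\bm{P}$ exists, completing sufficiency.

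The main obstacle, such as it is, is recognizing the right ansatz for $\bm{P}$ — the equal-$c$ trick — and verifying that it not only makes the $N$ inequalities \emph{mutually} consistent but reduces them to exactly the stated scalar condition rather than something stronger; the algebraic simplification $a_k + (1-a_k) = 1$ is what makes the bound tight. Everything else is bookkeeping: the necessity direction is a one-line consequence of the positivity of the right-hand side in Theorem~\ref{th:stability}, and the boundary/degenerate cases are routine.
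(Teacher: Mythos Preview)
Your sufficiency argument is correct and in fact more explicit than the paper's: the paper simply asserts that one can choose delays $D_n\in(1,\infty)$ to saturate the identity of Lemma~\ref{lem:identity_1} and then find matching powers via \eqref{eq:set_identity_1}, whereas you exhibit the powers directly via the equal-$c$ ansatz and reduce the $N$ inequalities to the single scalar condition. That part is fine.

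The necessity direction, however, has a genuine gap. You write that positivity of the right-hand side in $\cal{R}$ forces $1-\sum_k\phi_k\lambda_k a_k>0$ and that this ``is exactly the claimed inequality.'' It is not: the corollary requires $\sum_k\phi_k\lambda_k\,\frac{a_k}{1-a_k}<1$, which is strictly stronger because $a_k<\frac{a_k}{1-a_k}$ on $(0,1)$. Mere positivity of the numerator does not recover the factor $1/(1-a_k)$. The paper closes this gap via Lemma~\ref{lem:identity_1}: if the network is stable then $\sum_n\phi_n\lambda_n\frac{D_n}{D_n-1}\frac{a_n}{1-a_n}=1$ with each $\frac{D_n}{D_n-1}>1$, whence the strict inequality follows immediately. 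Alternatively, and more in the spirit of your own sufficiency argument, you can multiply the $n$-th defining inequality of $\cal{R}$ by $P_n^\delta\lambda_n a_n$ and sum over $n$; the right-hand side telescopes to $1-\sum_k\phi_k\lambda_k a_k$, the left-hand side becomes $\sum_n\phi_n\lambda_n\frac{a_n^2}{1-a_n}$, and adding $\sum_n\phi_n\lambda_n a_n$ to both sides yields exactly $\sum_n\phi_n\lambda_n\frac{a_n}{1-a_n}<1$. Either route repairs the argument; as written, your necessity step proves only a weaker statement.
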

\begin{proof}
	If for some $P_1,P_2,\dots,P_N$ the system is stable, then from Theorem~\ref{th:stability}, $\bm{a}\in\cal{R}$ and from Lemma~\ref{lem:identity_1} we have that
    \begin{align*}
    	1 
        &= \sum_{n\in\cal{N}} \phi_n\,\lambda_n\,
        \dfrac{D_n}{D_n-1}\,\dfrac{a_n}{1-a_n} \\
        &> \sum_{n\in\cal{N}} \phi_n\,
        \lambda_n\,\dfrac{a_n}{1-a_n}.
    \end{align*}
	On the other hand, if we know that $\sum_{n\in\cal{N}} \phi_n\,\lambda_n\frac{a_n}{1-a_n}<1$, then we can choose $D_1,D_2,\dots,D_N\in(1,\infty)$ such that $\sum_{n\in\cal{N}} \phi_n\,\lambda_n\frac{D_n}{D_n-1}\frac{a_n}{1-a_n}=1$. It is easy to see that we can find $P_1,P_2,\dots,P_N$ such that $\bm{a}\in\cal{R}$ in \eqref{eq:set_identity_1} and, again from Theorem~\ref{th:stability}, the system is stable.
\end{proof}

% % % % % % % % % % % % % % % % % % % % % % % % % % % % % %
\section{Interpretation and Application of the Results}
\label{sec:application}

In this section, we present some numerical results using the proposed  formulation, applied to scenarios of different classes of D2D terminals and cellular terminals sharing a radio channel.

% -------------------------------- %
\subsection{Optimization problem}

Let us consider the scenario with $N$ classes of D2D terminals sharing a channel. Each class may represent a particular user application, with each application having a different delay requirement in the network. For instance, applications such as Tactile Internet \cite{fettweis2014tactile} or V2V have a more restrictive delay requirement than video streams. Let us suppose we are interested in adjusting the transmit power of each traffic class, such that the weighted average delay among all classes is minimized. This problem may be addressed as follows. For fixed arrival rates $\bm{a}$ that satisfies Corollary~\ref{cor:stab}, let us minimize the delays $\bm{D}$ by controlling the ratio between the transmission powers $\bm{P}$. Since each traffic class may require different response times, let us weight the optimization problem with the vector $(c_1,c_2,\dots,c_N) \in \R_+^N$. The larger the coefficient of a class, the smaller the mean delay to deliver packets for that class. Then, we have
\begin{equation} \label{eq:opt_prob}
	\min_{\bm{P}\in\R_+^N}~\sum_{n\in\cal{N}} c_n\,D_n,
\end{equation}
where $D_n$ is given by Proposition~\ref{prop:psk}. Note that as thermal noise is not considered in our model, we have a degree of freedom for the solution $\bm{P}^*$. %Then, to simplify the algebra and find an unique solution we can suppose that $\sum_k P_k^\delta\lambda_k a_k = 1-\sum_k\phi_k\lambda_k a_k$.

\begin{proposition} \label{prop:opt}
	The minimum of the optimization problem \eqref{eq:opt_prob} is attained by $\beta\bm{P}^*$, where $\beta$ is any positive real constant and for $n\in\cal{N}$,
    \begin{equation*}
%     	{P_n^*}^\delta = \left(\sum_k\sqrt{c_k\,\phi_k\,\lambda_k\,
%         \dfrac{a_k}{1-a_k}}\right) \phi_n\,\dfrac{a_n}{1-a_n} +
%         \left(1-\sum_k\phi_k\,\lambda_k\,\dfrac{a_k}{1-a_k}\right)
%         \sqrt{\dfrac{c_n\,\phi_n}{\lambda_n\,a_n\,(1-a_n)}}.
        {P_n^*}^\delta
        	= \dfrac{ \phi_n\,\dfrac{a_n}{1-a_n} }
        	{ 1-\displaystyle\sum_{k\in\cal{N}}\phi_k\,\lambda_k\,
            \dfrac{a_k}{1-a_k} } + \dfrac{ \sqrt{\dfrac{c_n\,\phi_n}
            {\lambda_n\,a_n\,(1-a_n)}} }{\displaystyle\sum_{k\in\cal{N}}
            \sqrt{c_k\,\phi_k\,\lambda_k\,\dfrac{a_k}{1-a_k}}}.
    \end{equation*}
\end{proposition}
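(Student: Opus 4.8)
The plan is to collapse the whole problem onto a one-line Cauchy--Schwarz inequality after two changes of variables. First set $x_n \triangleq P_n^\delta$ and note that, by Proposition~\ref{prop:psk}, each $p_{s,n}$ (hence each $D_n$, hence the objective) depends on $\bm P$ only through the normalized ratios $u_n \triangleq x_n/S$ with $S \triangleq \sum_j x_j\lambda_j a_j$; this already explains the free scaling $\beta$ in the statement, and it reduces \eqref{eq:opt_prob} to a minimization over the open simplex $\{\bm u>0:\sum_n u_n\lambda_n a_n=1\}$ (intersected with the stability region, which is nonempty precisely because $\bm a$ satisfies Corollary~\ref{cor:stab}). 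Writing $C\triangleq 1-\sum_j\phi_j\lambda_j a_j>0$, Proposition~\ref{prop:psk} gives $p_{s,n}=u_nC/(u_nC+\phi_n)$, and a short manipulation of $D_n=(1-a_n)/(p_{s,n}-a_n)$ yields the clean form
\[
  D_n = 1 + \frac{\phi_n}{(1-a_n)\,C\,u_n - a_n\phi_n}.
\]

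Next substitute $v_n \triangleq (1-a_n)\,C\,u_n - a_n\phi_n$, which is positive exactly on the stability region (there $D_n\in(1,\infty)$), so that $\sum_n c_nD_n = \sum_n c_n + \sum_n c_n\phi_n/v_n$ and only the second sum matters. Using $u_n=(v_n+a_n\phi_n)/((1-a_n)C)$ together with the algebraic identity $a_n+a_n^2/(1-a_n)=a_n/(1-a_n)$, the linear constraint $\sum_n u_n\lambda_n a_n=1$ transforms into $\sum_n w_n v_n = K$, where $w_n\triangleq \lambda_n a_n/(1-a_n)$ and $K\triangleq 1-\sum_n\phi_n\lambda_n\frac{a_n}{1-a_n}$, and $K>0$ is exactly the hypothesis of Corollary~\ref{cor:stab}. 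On this bounded open simplex $\sum_n c_n\phi_n/v_n$ is convex and blows up as any $v_n\to 0^+$, so the infimum is attained at the unique interior critical point. By Cauchy--Schwarz,
\[
  \Bigl(\sum_n \tfrac{c_n\phi_n}{v_n}\Bigr)\Bigl(\sum_n w_n v_n\Bigr) \ge \Bigl(\sum_n \sqrt{c_n\phi_n w_n}\Bigr)^2,
\]
with equality iff $v_n \propto \sqrt{c_n\phi_n/w_n}$; imposing $\sum_n w_n v_n=K$ then gives $v_n^* = K\sqrt{c_n\phi_n/w_n}\big/\sum_k\sqrt{c_k\phi_k w_k}$.

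Finally I would translate back: $u_n^* = (v_n^*+a_n\phi_n)/((1-a_n)C)$, and since the overall scale of $\bm P$ is free, taking ${P_n^*}^\delta = (C/K)\,u_n^*$ and substituting $w_n=\lambda_n a_n/(1-a_n)$ and $K=1-\sum_k\phi_k\lambda_k\frac{a_k}{1-a_k}$ reproduces the two advertised summands after routine simplification (the $a_n\phi_n$ part becomes the first term, and $v_n^*$ divided by $K(1-a_n)$ becomes the second). The substantive mathematical content is just the Cauchy--Schwarz (equivalently Lagrange-multiplier) step; the main obstacle is purely bookkeeping — verifying the closed form for $D_n$, the transformation of the constraint into $\sum_n w_n v_n=K$, and matching the final scaling constant so that ${P_n^*}^\delta$ lands exactly on the stated formula — and I expect pinning down that scaling constant to be the fiddliest point.
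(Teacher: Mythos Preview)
Your argument is correct. The changes of variables are right: with $u_n=P_n^\delta/S$ one gets $p_{s,n}=u_nC/(u_nC+\phi_n)$ and then $D_n=1+\phi_n/v_n$ with $v_n=(1-a_n)Cu_n-a_n\phi_n$; the constraint $\sum u_n\lambda_na_n=1$ does become $\sum w_nv_n=K$ (the feasible set is a bounded relatively open simplex since $w_n>0$ forces $v_n<K/w_n$), and your Cauchy--Schwarz step gives the unique minimizer $v_n^\ast$; back-substitution with the scale ${P_n^\ast}^\delta=(C/K)u_n^\ast$ reproduces both terms of the stated formula exactly.

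The paper does not give a detailed proof: it simply says the result follows from the Karush--Kuhn--Tucker conditions and cites a nonlinear-programming textbook. Your route is therefore more explicit and, arguably, more elementary: after the two substitutions the problem is the textbook minimization of $\sum c_n\phi_n/v_n$ under a linear budget, and Cauchy--Schwarz both locates the minimizer and certifies global optimality in one stroke, bypassing any second-order or constraint-qualification discussion that a bare KKT argument would need. What the KKT viewpoint buys is that one can in principle work directly in the original variables $P_n^\delta$ without first discovering the reductions to $u_n$ and then $v_n$; what your approach buys is a transparent structure (convex objective, affine constraint, closed-form equality case) and a computation-free optimality proof. The two are, of course, equivalent in the end: your equality condition $v_n\propto\sqrt{c_n\phi_n/w_n}$ is exactly the Lagrange stationarity condition for the reduced problem.
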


\begin{proof}
	This can be proved by using Karush–Kuhn–Tucker conditions \cite[Section~3.3.1]{bertsekas1999nonlinear}.
\end{proof}

It is interesting to note that if $c_n = \phi_n\,\lambda_n\,\frac{a_n}{1-a_n}$, then the optimum is attained when $D_1 = D_2 = \cdots = D_N = \left( 1 - \sum_k \phi_k\,\lambda_k\,\frac{a_k}{1-a_k} \right)^{-1}$.

\begin{figure}[t]
\centering
\begin{tikzpicture}[scale=0.7]

\begin{axis}
[
  title={},
  legend style={at={(0.95,0.95)}, anchor=north east},
  legend cell align=left,
  xlabel={Coefficient $c_2$},
  ylabel={Mean Delay [slots]}, ylabel style={rotate=0},
  xmin = 0.1, 	xmax = 1.0,
  ymin = 1, 	ymax = 6,
  grid = both,
  scale only axis,
]
	\legend{$D_1$, $D_2$, Weighted delay}
    \addplot[red, thick] table
    [
		x expr = \thisrow{c2},
    	y expr = \thisrow{D1}
    ] {./Data/Opt_Delay.dat};
    \addplot[black, dashed, thick] table
    [
		x expr = \thisrow{c2},
    	y expr = \thisrow{D2}
    ] {./Data/Opt_Delay.dat};
    \addplot[blue, dotted, thick] table
    [
		x expr = \thisrow{c2},
    	y expr = \thisrow{Davg}
    ] {./Data/Opt_Delay.dat};
\end{axis}

\end{tikzpicture}
\caption{Delays of two-class D2D network, for $a_1=a_2=0.7$, $\phi_1\,\lambda_1 = \phi_2\,\lambda_2 = 0.15$, and $c_1 = 1$.}
\label{TwoClassD2D_delay}
\end{figure}
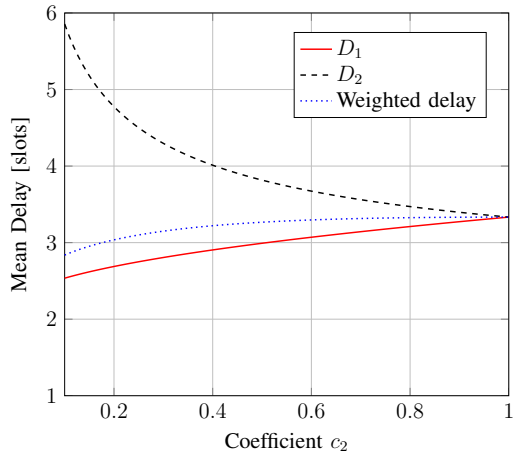

% \begin{figure}[t]
% \centering
% \input{./Plots/Opt_TPower.tex}
% \caption{Transmit power levels of two-class D2D network, for $a_1=a_2=0.7$, $\phi_1\,\lambda_1 = \phi_2\,\lambda_2 = 0.15$, and $c_1 = 1$.}
% \label{TwoClassD2D_pwr}
% \end{figure}

As an example, let us consider a two-class D2D network, where Class 1 has a more restrictive delay requirement than Class 2, such that we choose $c_1 \ge c_2$. Figure \ref{TwoClassD2D_delay} show the delays $D_n$, $n = 1,2$, for $\frac{1}{10} \leq c_2 \leq 1$, $c_1 = 1$ and $a_1=a_2=0.7$. As expected, due to the stricter delay requirement of Class 1 and the symmetry between both classes, the optimization resulted in $D_1 < D_2$ and $P_1^* > P_2^*$.

% \begin{figure}[t]
% \centering
% \input{./Plots/Opt_Delay_a.tex}
% \caption{Delays of two-class D2D network, for $a_1=0.7$, $\phi_1\,\lambda_1 = \phi_2\,\lambda_2 = 0.15$, and $c_1 = 1$. The full curve is $D_2$ and the dashed curve is $D_1$.}
% \label{TwoClassD2D_delay_arrival}
% \end{figure}

% \begin{figure}[t]
% \centering
% \input{./Plots/Opt_RatioPower.tex}
% \caption{Transmit power levels of two-class D2D network, for $a_1=0.7$, $\phi_1\,\lambda_1 = \phi_2\,\lambda_2 = 0.15$, and $c_1 = 1$.}
% \label{TwoClassD2D_pwr}
% \end{figure}

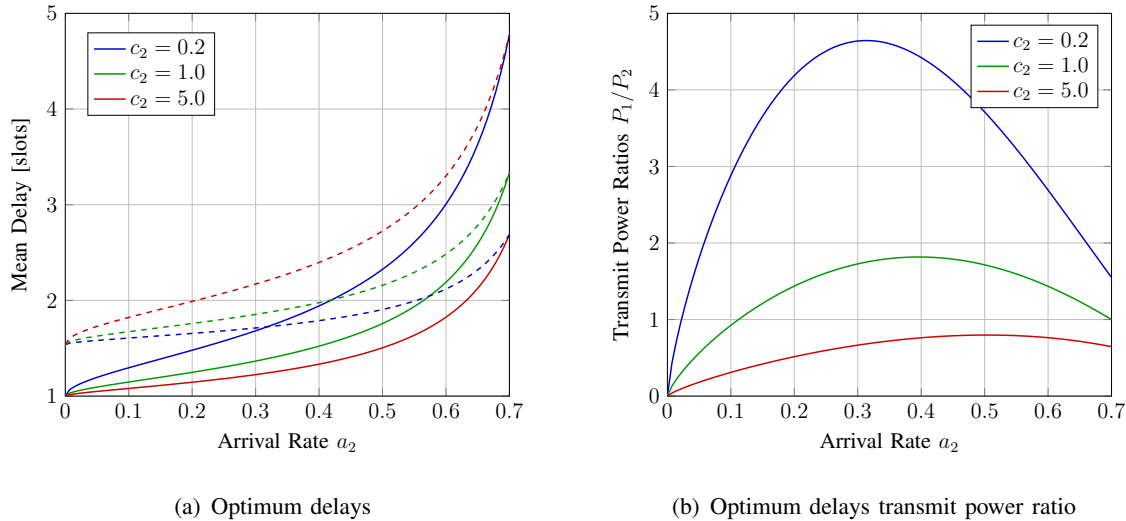
\begin{figure}[htb]
\centerline{
\subfigure[Optimum delays]{
\begin{tikzpicture}[scale=0.7]

\begin{axis}
[
  title={},
  legend style={at={(0.05,0.95)}, anchor=north west},
  xlabel={Arrival Rate $a_2$},
  ylabel={Mean Delay [slots]}, ylabel style={rotate=0},
  xmin = 0, 	xmax = 0.7,
  ymin = 1, 	ymax = 5,
  grid = both,
  scale only axis,
]
	\legend{$c_2 = 0.2$, $c_2 = 1.0$, $c_2 = 5.0$}
    \addplot[blue!80!black, thick] table
    [
		x expr = \thisrow{a2},
    	y expr = \thisrow{D2_.2}
    ] {./Data/Opt_Delay_a.dat};
    \addplot[green!60!black, thick] table
    [
		x expr = \thisrow{a2},
    	y expr = \thisrow{D2_1}
    ] {./Data/Opt_Delay_a.dat};
    \addplot[red!80!black, thick] table
    [
		x expr = \thisrow{a2},
    	y expr = \thisrow{D2_5}
    ] {./Data/Opt_Delay_a.dat};
        \addplot[blue!80!black, thick, dashed] table
    [
		x expr = \thisrow{a2},
    	y expr = \thisrow{D1_.2}
    ] {./Data/Opt_Delay_a.dat};
    \addplot[green!60!black, thick, dashed] table
    [
		x expr = \thisrow{a2},
    	y expr = \thisrow{D1_1}
    ] {./Data/Opt_Delay_a.dat};
    \addplot[red!80!black, thick, dashed] table
    [
		x expr = \thisrow{a2},
    	y expr = \thisrow{D1_5}
    ] {./Data/Opt_Delay_a.dat};
\end{axis}

\end{tikzpicture}
\label{TwoClassD2D_delay_arrival}}
\hfil
\subfigure[Optimum delays transmit power ratio]{
\begin{tikzpicture}[scale=0.7]

\begin{axis}
[
  title={},
  legend style={at={(0.97,0.97)}, anchor=north east},
  xlabel={Arrival Rate $a_2$},
  ylabel={Transmit Power Ratios $P_1/P_2$}, ylabel style={rotate=0},
  xmin = 0, 	xmax = 0.7,
  ymin = 0, 	ymax = 5,
  grid = both,
  scale only axis,
]
	\legend{$c_2 = 0.2$, $c_2 = 1.0$, $c_2 = 5.0$}
    \addplot[blue!80!black, thick] table
    [
		x expr = \thisrow{a2},
    	y expr = \thisrow{0.2}
    ] {./Data/Opt_RatioPower.dat};
    \addplot[green!60!black, thick] table
    [
		x expr = \thisrow{a2},
    	y expr = \thisrow{1}
    ] {./Data/Opt_RatioPower.dat};
    \addplot[red!80!black, thick] table
    [
		x expr = \thisrow{a2},
    	y expr = \thisrow{5}
    ] {./Data/Opt_RatioPower.dat};
\end{axis}

\end{tikzpicture}
\label{TwoClassD2D_pwr}}
}
\caption{These figures represent the optimization of a 2-class D2D network with the following parameters: $a_1=0.7$, $\phi_1\,\lambda_1 = \phi_2\,\lambda_2 = 0.15$ and $c_1 = 1$.
In the left figure, the dashed curve is $D_1$ and the full curve is $D_2$.}
\label{fig:opt}
\end{figure}

Another interesting example is to consider two classes with the same $\lambda\,\phi$ parameter, but with different arrival rates. Figures \ref{TwoClassD2D_delay_arrival} and \ref{TwoClassD2D_pwr} show the delays and the transmit power ratio as a function of the arrival rate of the second class, respectively. As expected, in Fig.~\ref{TwoClassD2D_delay_arrival}, when $a_2$ increases, both delays increase and when $c_2\ge c_1$, $D_2$ (full curve) tends to remain below $D_1$ (dashed curve). It is worth noting that the curve of the transmit powers in Fig.~\ref{TwoClassD2D_pwr} has a maximum. A possible explanation of this interesting behavior is that when $a_2$ is small, the second class rarely causes interference in the first class, then $P_1/P_2 \approx 0$ is the best choice to minimize the delays. As $a_2$ increases, it is necessary to increase the relative transmit power of Class 1, since the interference of Class 2 in Class 1 increases. However, when $a_2$ is large enough, the packet success probability of Class 2 becomes a concern, therefore it is necessary to decrease the interference from Class 1, thus decreasing the ratio $P_1/P_2$.

% -------------------------------- %
\subsection{Cellular and D2D}
\label{ssec:cel_D2D}

Let the first and second traffic classes represent the D2D and the cellular, respectively. For the cellular, we consider the \textit{uplink} transmission, which is closer to the proposed model, since the base stations do not move, only the users move and the model uses a high-mobility PPP. Furthermore, we must disregard temporal correlation to adequate to the model assumptions.

The maximum arrival rate for the D2D user, when we are able to control the transmission power is given by Proposition~\ref{prop:max_a1} and a numerical example is showed in Fig.~\ref{fig:max_a1}, where the quantity $\Psi_n \triangleq \phi_n\, \lambda_n\, \frac{D_n}{D_n-1}\, \frac{a_n}{1-a_n} \geq 0$ measures the use of the channel by the $n$-th traffic class in the sense presented by Lemma~\ref{lem:identity_1}, where we have that $\sum_{n\in\cal{N}}\Psi_n = 1$. Then, it is natural to think that the $n$-th traffic class uses a percentage $\Psi_n$ of the channel.

\begin{proposition} \label{prop:max_a1}
Given the arrival rate $a_2$ and the constraints $D_1 \in (1,D_1^*]$ and $D_2 \in (1,D_2^*]$, the possible arrival rates for the first traffic class, over all $P_1,P_2 \in \R_+$, such that the system is stable, is given by
\begin{equation*}
	a_1 \leq \left( 1 + \dfrac{ \phi_1\,\lambda_1\,\frac{D_1^*}{D_1^*-1} }
    { 1 - \Psi_2^* } \right)^{-1},
\end{equation*}
when $\Psi_2^* < 1$, where $\Psi_2^* \triangleq \phi_2\,\lambda_2\,\frac{D_2^*}{D_2^*-1}\,\frac{a_2}{1-a_2}$. We can achieve equality with
\begin{equation*}
	\frac{\phi_1}{\phi_2}\frac{P_2^\delta}{P_1^\delta}
    = \frac{\frac{\Psi_2^*}{\phi_2\,\lambda_2} + \frac{1}{D_2^*-1}}
        {\frac{1-\Psi_2^*}{\phi_1\,\lambda_1}+\frac{1}{D_1^*-1}}.
\end{equation*}
\end{proposition}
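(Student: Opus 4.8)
The plan is to turn the optimization over the power pair $(P_1,P_2)$ into a one‑parameter calculus problem in the two mean delays $D_1,D_2$, using the identities of Lemma~\ref{lem:identity_1}. Specialized to $N=2$, the first identity says that every stable configuration satisfies, at steady state,
\[
\phi_1\,\lambda_1\,\frac{D_1}{D_1-1}\,\frac{a_1}{1-a_1} + \phi_2\,\lambda_2\,\frac{D_2}{D_2-1}\,\frac{a_2}{1-a_2} = 1 ,
\]
while the second identity fixes the ratio $P_1^\delta/P_2^\delta$ once $(a_1,a_2,D_1,D_2)$ is known. Conversely — this is the observation recorded after Theorem~\ref{th:stability} and used in the proof of Corollary~\ref{cor:stab} — given any $(D_1,D_2)\in(1,\infty)^2$ for which the induced $a_1$ lies in $(0,1)$, one recovers a stable pair $(P_1,P_2)$ (the common scale being the free degree of freedom mentioned before Proposition~\ref{prop:opt}). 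So, for the fixed $a_2$, ranging over all admissible $(P_1,P_2)$ is the same as ranging over admissible $(D_1,D_2)$, with $a_1$ then pinned down by the displayed identity; and since the delay constraints are $D_n\le D_n^*<\infty$, stability is automatic.

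Solving the identity for $a_1$ with $\Psi_2\triangleq\phi_2\,\lambda_2\,\frac{D_2}{D_2-1}\,\frac{a_2}{1-a_2}$ gives
\[
\frac{a_1}{1-a_1} = \frac{1-\Psi_2}{\phi_1\,\lambda_1\,\frac{D_1}{D_1-1}},
\qquad
a_1 = \left( 1 + \frac{\phi_1\,\lambda_1\,\frac{D_1}{D_1-1}}{1-\Psi_2} \right)^{-1}.
\]
Because $x\mapsto x/(x-1)$ is strictly decreasing on $(1,\infty)$, increasing $D_1$ makes the numerator $\phi_1\lambda_1\frac{D_1}{D_1-1}$ smaller, and increasing $D_2$ makes $\Psi_2$ smaller hence $1-\Psi_2$ larger; either way $a_1$ strictly increases. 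Maximizing over $D_1\in(1,D_1^*]$ and $D_2\in(1,D_2^*]$ therefore puts us at the corner $D_1=D_1^*$, $D_2=D_2^*$, where $\Psi_2=\Psi_2^*$. That corner is admissible exactly when $1-\Psi_2^*>0$ — otherwise the identity leaves no room for a positive $a_1$ — which is the hypothesis $\Psi_2^*<1$; substituting yields the stated bound, and it is attained (not merely approached) since the box includes that corner.

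For the equality case I would read off the power ratio from the second identity of Lemma~\ref{lem:identity_1} evaluated at $D_1=D_1^*$, $D_2=D_2^*$:
\[
\frac{\phi_1}{P_1^\delta}\left(\frac{D_1^*}{D_1^*-1}\,\frac{1}{1-a_1}-1\right) = \frac{\phi_2}{P_2^\delta}\left(\frac{D_2^*}{D_2^*-1}\,\frac{1}{1-a_2}-1\right).
\]
A one‑line computation using $\Psi_2^*=\phi_2\lambda_2\frac{D_2^*}{D_2^*-1}\frac{a_2}{1-a_2}$ rewrites the right parenthesis as $\frac{\Psi_2^*}{\phi_2\lambda_2}+\frac{1}{D_2^*-1}$, and using the fact that at this corner the identity forces $\Psi_1=1-\Psi_2^*$ rewrites the left parenthesis as $\frac{1-\Psi_2^*}{\phi_1\lambda_1}+\frac{1}{D_1^*-1}$; solving for $\frac{\phi_1}{\phi_2}\frac{P_2^\delta}{P_1^\delta}$ reproduces the claimed formula. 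The one step I expect to need real care is the converse embedded in the first paragraph — that every admissible $(D_1,D_2)$ genuinely arises from a stable $(P_1,P_2)$; I would discharge it exactly as in the proof of Corollary~\ref{cor:stab}: choose the delays, solve Lemma~\ref{lem:identity_1} for the powers, and invoke Theorem~\ref{th:stability} to certify stability.
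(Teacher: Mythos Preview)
Your argument is correct and is exactly the approach the paper intends: the paper's own proof is the single line ``Follows from Lemma~\ref{lem:identity_1} and Theorem~\ref{th:stability},'' and what you wrote is precisely the natural unpacking of that sentence --- use the first identity of Lemma~\ref{lem:identity_1} to express $a_1$ in terms of $(D_1,D_2)$, maximize by monotonicity at the corner $(D_1^*,D_2^*)$, and read off the power ratio from the second identity, with Theorem~\ref{th:stability} (via the remark after it and the mechanism of Corollary~\ref{cor:stab}) guaranteeing that every admissible delay pair is realized by some stable power choice.
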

\begin{proof}
	Follows from Lemma~\ref{lem:identity_1} and Theorem~\ref{th:stability}.
\end{proof}

% \begin{figure}[htb]
% 	\centering
% 	\input{./Plots/max_a1.tex}	
% 	\caption{Maximum arrival rate of packets per time slot for the D2D, for some operation conditions of the cellular and $\phi_1\,\lambda_1 = 1$.}
% 	\label{fig:max_a1}
% \end{figure}

% \begin{figure}[htb]
%     \centering
%     \input{./Plots/P1P2.tex}	
%     \caption{Ratio of the transmit powers to allow the maximum arrival rate of packets, for some operation conditions of the cellular, $\phi_1\,\lambda_1 = \phi_2\,\lambda_2 = 1$ and $D_2^* = 3$ [slots].}
%     \label{fig:P1P2}
% \end{figure}

\begin{figure}[htb]
\centerline{
\subfigure[Maximum arrival rate]{
\begin{tikzpicture}[scale=0.7]

\begin{axis}
[
  title={},
  legend style={at={(0.95,0.05)}, anchor=south east},
  xlabel={Maximum Mean Delay $D_1^*$ [slots]},
  ylabel={Maximum Arrival Rate $a_1$}, ylabel style={rotate=0},
  xmin = 0, 	xmax = 10,
  ymin = 0, 	ymax = .5,
  grid = both,
  scale only axis,
]
	\legend{$\Psi_2 = 0.0$, $\Psi_2 = 0.3$, $\Psi_2 = 0.5$, $\Psi_2 = 0.7$}
    \addplot[yellow!50!black, thick] table
	[
		x expr = \thisrow{D1},
    	y expr = \thisrow{0}
    ] {./Data/max_a1.dat};
    \addplot[blue!80!black, thick] table
    [
		x expr = \thisrow{D1},
    	y expr = \thisrow{0.3}
    ] {./Data/max_a1.dat};
    \addplot[green!80!black, thick] table
    [
		x expr = \thisrow{D1},
    	y expr = \thisrow{0.5}
    ] {./Data/max_a1.dat};
    \addplot[red!80!black, thick] table
	[
		x expr = \thisrow{D1},
    	y expr = \thisrow{0.7}
    ] {./Data/max_a1.dat};
\end{axis}

\end{tikzpicture}
\label{fig:max_a1}}
\hfil
\subfigure[Transmit power ratio]{
\begin{tikzpicture}[scale=0.7]

\begin{axis}
[
  title={},
  legend style={at={(0.05,0.95)}, anchor=north west},
  xlabel={Maximum Mean Delay $D_1^*$ [slots]},
  ylabel={Transmit Power Ratio $\frac{\phi_1}{\phi_2}\frac{P_2^\delta}{P_1^\delta}$}, ylabel style={rotate=0},
  xmin = 0, 	xmax = 10,
  ymin = 0, 	ymax = 3,
  grid = both,
  scale only axis,
]
	\legend{$\Psi_2 = 0.7$, $\Psi_2 = 0.5$, $\Psi_2 = 0.3$}
    \addplot[red!80!black, thick] table
	[
		x expr = \thisrow{D1},
    	y expr = 1/\thisrow{0.7}
    ] {./Data/P1P2.dat};
    \addplot[green!80!black, thick] table
    [
		x expr = \thisrow{D1},
    	y expr = 1/\thisrow{0.5}
    ] {./Data/P1P2.dat};
    \addplot[blue!80!black, thick] table
    [
		x expr = \thisrow{D1},
    	y expr = 1/\thisrow{0.3}
    ] {./Data/P1P2.dat};
\end{axis}

\end{tikzpicture}
\label{fig:P1P2}}
}
\caption{Left figure shows the maximum arrival rate achievable for the first traffic class (D2D), such that the constraints of Proposition~\ref{prop:max_a1} are satisfied; $\Psi_2$ represents the use of the channel by the second traffic class (cellular); we used $\phi_1\,\lambda_1 = 1$. Right figure shows the transmit power ratio to achieve the maximum $a_1$; we used $\phi_1\,\lambda_1 = \phi_2\,\lambda_2 = 1$ and $D_2^* = 3$ [slots].}
\label{fig:2_class}
\end{figure}
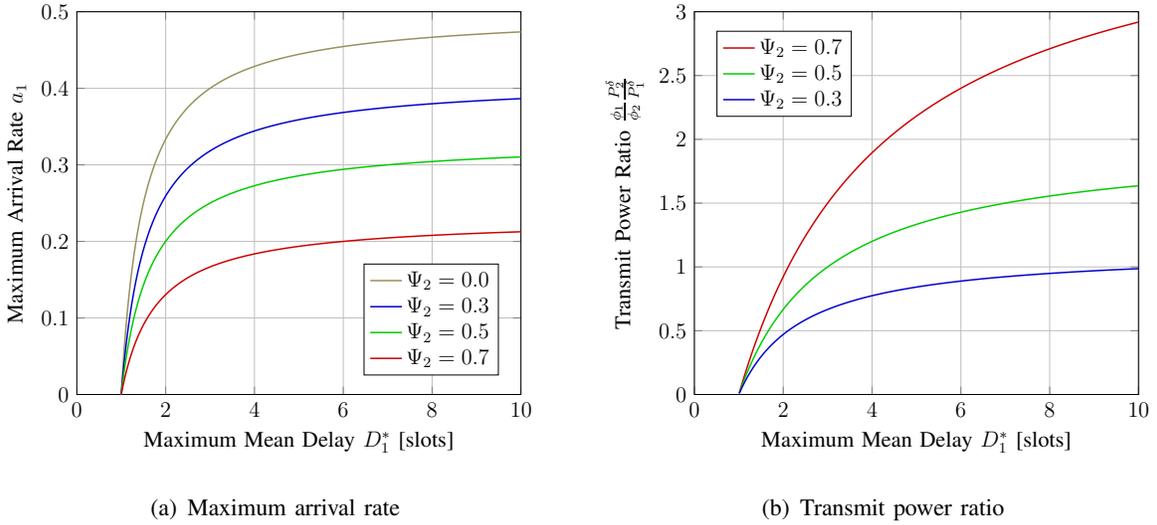

As expected, Fig.~\ref{fig:max_a1} shows that as the use of the channel by the cellular increases or as the maximum delay constraint of the D2D decreases, the maximum arrival rate for the D2D decreases. Furthermore, as the D2D delay constraint increases, the smaller the impact of this change over the maximum arrival rate permitted. This analysis agrees with the simple equation deduced in Lemma~\ref{lem:identity_1} that at steady state $\Psi_1+\Psi_2 = 1$, \textit{i.e.}, we may divide a percentage $\Psi_1$ of the use of the channel for the cellular and the other percentage $\Psi_2$ for the D2D. Then, for the $\Psi$ available, we may choose the parameters of performance $a$ and $D$, such that we preserve the identity $\phi\, \lambda\, \frac{D}{D-1}\, \frac{a}{1-a} = \Psi$. The quantity $\phi$ is a constant related to the mean link distance of transmission and the SIR threshold for successful communication. Therefore, the relation by $a$ and $D$ is determined by the term $\phi\,\lambda$, which is proportional to the quantities $\lambda\,\overline{R}^2\,\theta^{2/\alpha}$.

In order to attain the maximum arrival rate for the D2D, it is necessary to have the transmission power ratios presented in Proposition~\ref{prop:max_a1}. In Fig.~\ref{fig:P1P2}, it is shown this ratio as a function of the maximum delay for the D2D for some values of $\Psi_2$, which is the percentage of the channel used by the cellular. As expected, as we increase the maximum delay for the D2D or as we decrease the use of the channel by the D2D, the smaller the relative power transmission required. It is remarkable that, again, as the maximum delay constraint increases, the smaller the impact over the power transmission ratio. Differently from the quantities $a$ and $D$, the power transmission ratio is not simply determined by the product $\lambda\,\overline{R}^2\,\theta^{2/\alpha}$, we need to know the values of $\overline{R}^2\,\theta^{2/\alpha}$ and $\lambda$ separately.

%%%%%%%%%%%%%%%%%%%%%%%%%%%%%%%%%%%%%%%%%%%%%%%%%
\section{Conclusions}
In this paper, it is proposed a modified model to study the stability and delay of slotted Aloha in Poisson networks. The main modification of the model presented in the paper with respect to other models presented in the literature is to consider an i.i.d. Rayleigh distribution for the distance of the link between source and destination. This provided tractability to the model: we derived necessary and sufficient conditions for stability in a network with $N$ user-classes; we also provided simple closed-form expressions for the packet success probability and mean delay. As shown by the results in the paper, the advantage of using this model as a base to model other network effects is its analytical tractability. For example, we were able to derive simple conditions to verify the stability of a network with undetermined transmit powers (see Corollary~\ref{cor:stab}). We also solved (analytically) an optimization problem regarding the minimization of the delays in a network (see Proposition~\ref{prop:opt}); this result was applied to a numerical example involving a D2D network and it showed interesting insights about the optimum transmit power of the user-classes.

%%%%%%%%%%%%%%%%%%%%%%%%%%%%%%%%%%%%%%%%%%%%%%%%%
\appendices
\section{Proof of Lemma~\ref{lem:sufficient}}
\label{app:lemm_stab}
\begin{proof}
	Using the concept of stochastic dominance \cite[Section~2.1.2]{kompella2014stable}, it is possible to derive necessary and sufficient conditions for stability. In the dominant network, all the traffic classes in the set $\cal{D}\subset\cal{N}$ transmit dummy packets. If the dominant network is stable, then the original network is stable. On the other hand, if the queues of the traffic classes in $\cal{D}$ are not empty in the original network, then this system behaves exactly as the dominant network (both systems are \emph{indistinguishable} \cite[Section~3.2]{szpankowski1994stability}). Therefore, if the dominant network is unstable, then the original network will be unstable as well. In order to have necessary and sufficient conditions, we must perform this verification for each $\cal{D} \subset \cal{N}$.
    
    Let us start with $\cal{D}=\cal{N}$, \emph{i.e.}, all users transmit dummy packages. For each step of the verification, we remove the stable traffic class from the set $\cal{D}$. This procedure repeats until the set $\cal{D}$ becomes empty. In order to attain stability of the dominant network we must have an incoming packet probability smaller than the success probability \cite{loynes1962stability}. A sufficient condition for the first traffic class stability is, for any queue $i$ of this class (by symmetry),
    \begin{equation*}
    	a_1 < \P(\widetilde{\text{SIR}}_{i,1} > \theta_1) = \left( 1 + \dfrac{\phi_1}{P_1^\delta} \sum_{k=1}^N P_k^\delta\,\lambda_k \right)^{-1},
    \end{equation*}
    where $\widetilde{\text{SIR}}$ represents the signal-interference ratio in the dominant network. This guarantees stability for the first traffic class. Let us remove it from the set $\cal{D}$. Then, we calculate the stationary success probability of the first traffic class $\widetilde{p}^{(1)}_{s,1}$ for this dominant network. At steady state, we have
    \begin{equation*}
    	\widetilde{p}^{(1)}_{s,1} = \left( 1 + \dfrac{\phi_1}{P_1^\delta} \left( P_1^\delta\,\lambda_k\dfrac{a_1}{\widetilde{p}^{(1)}_{s,1}} + \sum_{k=2}^N P_k^\delta\,\lambda_k \right) \right)^{-1},
    \end{equation*}
    which can be solved for $\widetilde{p}^{(1)}_{s,1}$,
    \begin{equation*}
    	\widetilde{p}^{(1)}_{s,1} = \dfrac{1 - \phi_1\,\lambda_1\,a_1}{1 + \dfrac{\phi_1}{P_1^\delta} \sum_{k=2}^N P_k^\delta\,\lambda_k}.
    \end{equation*}
    The next step is to verify the conditions of stability for the second traffic class, when the first traffic class is at steady state. After that, we remove the second traffic class from the set $\cal{D}$ and calculate the stationary success probability of the two stable traffic classes in the dominant network. We repeat these steps until we remove all traffic classes, \textit{i.e}, $\cal{D} = \{\}$. We show this by induction; we suppose stability of the traffic classes $1,2,\dots,j-1$. Let $\cal{D} = \{j,j+1,\dots\,N\}$; the $j$-th traffic class is stable, given that all the traffic classes in $\cal{N}\setminus\cal{D}$ are stable, when
    \begin{equation} \label{eq:aux_aj}
    	a_j < \P(\widetilde{\text{SIR}}_{i,j} > \theta_j) = \left( 1+\dfrac{\phi_j}{P_j^\delta}
        \left( \sum_{k=1}^{j-1} P_k^\delta\,\lambda_k\,\dfrac{a_k}
        {\widetilde{p}^{(j)}_{s,k}} + \sum_{k=j}^N P_k^\delta\,\lambda_k \right) \right)^{-1},
    \end{equation}
    where $\widetilde{p}^{(j)}_{s,k}$ is the $k$-th traffic class success probability ($1 \leq k < j$) at steady state in the dominant network at the $j$-th step. To calculate this probability, we must solve the following system of equations. For $k \in \{1,2,\dots,j-1\}$
    \begin{equation*}
    	\widetilde{p}^{(j)}_{s,k} = \left( 1 + \dfrac{\phi_k}{P_k^\delta}
        \left( \sum_{\ell = 1}^{j-1} P_\ell^\delta\,\lambda_\ell\,
        \dfrac{a_\ell}{\widetilde{p}^{(j)}_{s,\ell}} + 
        \sum_{\ell = j}^{N} P_\ell^\delta\,\lambda_\ell \right) \right)^{-1}.
    \end{equation*}
    Using an analogous approach as the one presented in the proof of Proposition~\ref{prop:psk}, we have that
    \begin{equation*}
    	\widetilde{p}^{(j)}_{s,k} =
        \left( 1 + \dfrac{\phi_k}{P_k^\delta} 
        \dfrac{ \sum_{\ell=1}^{j-1} P_\ell^\delta\,\lambda_\ell\,a_{\ell} +
        \sum_{\ell=j}^{N} P_\ell^\delta\,\lambda_\ell }
        { 1 - \sum_{\ell=1}^{j-1} \phi_{\ell}\,\lambda_\ell\,a_{\ell}} 
        \right)^{-1}, \quad k \in \{1,2,\dots,j-1\}.
    \end{equation*}
    Comparing the last two equations, it is easy to see that
    \begin{equation*}
    	\sum_{\ell = 1}^{j-1} P_\ell^\delta\,\lambda_\ell\,
        \dfrac{a_\ell}{\widetilde{p}^{(j)}_{s,\ell}} + 
        \sum_{\ell = j}^{N} P_\ell^\delta\,\lambda_\ell = 
        \dfrac{ \sum_{\ell=1}^{j-1} P_\ell^\delta\,\lambda_\ell\,a_{\ell} +
        \sum_{\ell=j}^{N} P_\ell^\delta\,\lambda_\ell }
        { 1 - \sum_{\ell=1}^{j-1} \phi_{\ell}\,\lambda_\ell\,a_{\ell} } .
    \end{equation*}    
    Finally, we can use this result to rewrite Eq.~\eqref{eq:aux_aj} as
    \begin{equation*}
    	\dfrac{\phi_j}{P_j^\delta}\,\dfrac{a_j}{1-a_j} <
        \dfrac{1 - \sum_{k=1}^{j-1} \phi_k\,\lambda_k\,a_k}
        {\sum_{k=1}^{j-1} P_k^\delta\,\lambda_k\,a_k +
        \sum_{k=j}^N P_k^\delta\,\lambda_k}, \quad j \in \cal{N}.
    \end{equation*}
    This concludes the proof, since the extension for the other partitions of $\cal{N}$ is analogous.
\end{proof}

\section{Proof of Theorem~\ref{th:stability}}
\label{app:th_stab}
\begin{proof}
    First, let us show that the set related to Lemma~\ref{lem:identity_1} (by taking all the possible delays) corresponds to the region $\cal{R}$ defined in the statement of the theorem, \textit{i.e.}, let us show that $\cal{R}=\cal{R}'$, where
\begin{equation} \label{eq:set_identity_1}
    \begin{split}
    	\cal{R}' \triangleq
        &\bigcup_{\bm{D}\in(1,\infty)^N} 
        	\left\lbrace \bm{a}\in[0,1]^N
        	\bigm\vert \sum_{n\in\cal{N}} \phi_n\,\lambda_n\,
        	\dfrac{D_n}{D_n-1}\,\dfrac{a_n}{1-a_n} = 1, \right.\\
        &\left. \quad\dfrac{\phi_j}{P_j^\delta} \left( 
			\dfrac{D_j}{D_j-1}\,\dfrac{1}{1-a_j} - 1\right) =        
        	\dfrac{\phi_k}{P_k^\delta} \left(
        	\dfrac{D_k}{D_k-1}\,\dfrac{1}{1-a_k} - 1\right)
        	\quad \forall\,j,k\in\cal{N} \right\rbrace.
	\end{split}
\end{equation}
     This can be seen by manipulating the equations that define the set in \eqref{eq:set_identity_1}. Let us start with
    \begin{equation*}
    	\sum_{k} \phi_k\,\lambda_k\,\dfrac{D_k}{D_k-1}\,
        \dfrac{a_k}{1-a_k} = 1,
    \end{equation*}
    which can be rewritten as
    \begin{equation*}
        \sum_{k} P_k^\delta\,\lambda_k\,a_k\dfrac{\phi_k}{P_k^\delta}
        \left(\dfrac{D_k}{D_k-1}\dfrac{1}{1-a_k} - 1\right)\,
        = 1 - \sum_k \phi_k\,\lambda_k\,a_k
    \end{equation*}
	Then, using the other equations in \eqref{eq:set_identity_1}, we have that
    \begin{equation*}
    	\dfrac{\phi_n}{P_n^\delta}\left(\dfrac{D_n}{D_n-1}
        \dfrac{1}{1-a_n} - 1\right)\sum_{k} P_k^\delta\,\lambda_k\,a_k\,
        	= 1 - \sum_k \phi_k\,\lambda_k\,a_k.
    \end{equation*}
    Since $\frac{D_n}{D_n-1} \in (1,\infty)$, then for each $n\in\cal{N}$, we must have
    \begin{equation} \label{eq:aux_R}
        \dfrac{\phi_n}{P_n^\delta}\,\dfrac{a_n}{1-a_n}
        < \dfrac{\phi_n}{P_n^\delta}\left(\dfrac{D_n}{D_n-1}
        \dfrac{1}{1-a_n} - 1\right)
        = \dfrac{1 - \sum_{k} \phi_k\,\lambda_k\,a_k}
        {\sum_{k} P_k^{\delta}\,\lambda_k\,a_k}.
    \end{equation}
    Therefore, $\cal{R}'\subset\cal{R}$. On the other hand, when $D_n$ varies continually from 1 to $\infty$, $a_n$ varies continually from 0 to the maximum value respecting Eq.~\eqref{eq:aux_R}, which means that $\cal{R}\subset\cal{R}'$. Then, $\cal{R}' = \cal{R}$.
    
    Now, let us prove that $\cal{R}' \subset \bigcup_{\nu\in\cal{P}} \cal{S}_\nu$. Note that the stability region of Lemma~\ref{lem:sufficient} demands that at least one $a_n$ ($n\in\cal{N}$) satisfies
\begin{equation} \label{eq:aux_stab_first}
	\dfrac{\phi_n}{P_n^\delta}\,\dfrac{a_n}{1-a_n} <
    \dfrac{1}{\sum_{k=1}^N P_n^\delta\,\lambda_n}.
\end{equation}
	Let us show that $\cal{R}'$ requires the same restriction by contradiction. Suppose that there exist $\bm{a}\in\cal{R}'$ such that
\begin{equation*}
    \dfrac{\phi_n}{P_n^\delta}\,\dfrac{a_n}{1-a_n} >
    \dfrac{1}{\sum_{k=1}^N P_k^\delta\,\lambda_k}
    \quad \forall n \in \cal{N}.
\end{equation*}
If $\bm{a}\in\cal{R}'$, then using \eqref{eq:set_identity_1} and the above inequality, we have that
\begin{align*}
	1	&= \sum_{n=1}^N P_n^\delta\,\lambda_n\,\dfrac{D_n}{D_n-1}\,
    \dfrac{\phi_n}{P_n^\delta}\,\dfrac{a_n}{1-a_n} \\
    	&> \dfrac{\sum_{n=1}^N P_n^\delta\,\lambda_n\,\dfrac{D_n}{D_n-1}}
    {\sum_{k=1}^N P_k^\delta\,\lambda_k} \\
    	&> 1.
\end{align*}
The last inequality comes from the fact that $D_n/(D_n-1)>1$, if $D_n\in(1,\infty)$. Clearly we have a contradiction, since $\cal{R}'$ is a non-empty set. Therefore, if $\bm{a}\in\cal{R}'$ we must have at least one $a_n$ that satisfies Eq.~\eqref{eq:aux_stab_first}. For simplicity of exposition, let us suppose that the $a_n$ that satisfies this restriction is from the first traffic class ($n=1$). The next step is to show that as in the set $\bigcup_{\nu\in\cal{P}}\cal{S}_\nu$, the set $\cal{R}'$ also requires that we have at least one $a_n$, aside from $a_1$, that satisfies
\begin{equation*}
	\dfrac{\phi_n}{P_n^\delta}\,\dfrac{a_n}{1-a_n} <
    \dfrac{1-\phi_1\,\lambda_1\,a_1}{P_1^\delta\,\lambda_1\,a_1+
    \sum_{k=2}^N P_k^\delta\,\lambda_k}.
\end{equation*}
We can also prove this by contradiction and then, for simplicity, suppose that $a_2$ is the one that satisfies this restriction. We repeat this procedure until we reach all the $N$ traffic classes. Let us show the $j$-th step for completeness. Suppose that for all $n \in \{j,j+1,\dots,N\}$,
\begin{equation*}
	\dfrac{\phi_n}{P_n^\delta}\,\dfrac{a_n}{1-a_n} >
    \dfrac{1-\sum_{k=1}^{j-1} \phi_k\,\lambda_k\,a_k}
    {\sum_{k=1}^{j-1} P_k^\delta\,\lambda_k\,a_k+
    \sum_{k=j}^N P_k^\delta\,\lambda_k}.
\end{equation*}
If $\bm{a}\in\cal{R}'$, $\ell\in\cal{N}$, then by \eqref{eq:set_identity_1} and the above inequality we have that
\begin{align*}
	\dfrac{\phi_\ell}{P_\ell^\delta}\left(\dfrac{D_\ell}{D_\ell-1}\,
    \dfrac{1}{1-a_\ell}-1\right)
    &= \dfrac{\phi_n}{P_n^\delta}\left(\dfrac{D_n}{D_n-1}\,
    \dfrac{1}{1-a_n}-1\right)\\
    &> \dfrac{\phi_n}{P_n^\delta}\,\dfrac{a_n}{1-a_n}\\
    &> \dfrac{1-\sum_{k=1}^{j-1} \phi_k\,\lambda_k\,a_k}
    {\sum_{k=1}^{j-1} P_k^\delta\,\lambda_k\,a_k+
    \sum_{k=j}^N P_k^\delta\,\lambda_k}.
\end{align*}
Again, we use \eqref{eq:set_identity_1} and the above inequalities to write that
\begin{align*}
	1 &= \sum_{\ell=1}^{j-1} P_\ell^\delta\,\lambda_\ell\,a_\ell\dfrac{\phi_\ell}
    {P_\ell^\delta}\left(\dfrac{D_\ell}{D_\ell-1}\,\dfrac{1}{1-a_\ell} - 1\right) +
	\sum_{\ell=1}^{j-1} \phi_\ell\,\lambda_\ell\,a_\ell +
    \sum_{n=j}^N P_n^\delta\,\lambda_n\,\dfrac{D_n}{D_n-1}\,
    \dfrac{\phi_n}{P_n^\delta}\,\dfrac{a_n}{1-a_n} \\
    &> \left( \dfrac{1-\sum_{k=1}^{j-1} \phi_k\,\lambda_k\,a_k}
    {\sum_{k=1}^{j-1} P_k^\delta\,\lambda_k\,a_k+
    \sum_{k=j}^N P_k^\delta\,\lambda_k} \right) \left(\sum_{\ell=1}^{j-1}
    P_\ell^\delta\,\lambda_\ell\,a_\ell + \sum_{n=j}^N P_n^\delta\,\lambda_n\,
	\right) + \sum_{\ell=1}^{j-1} \phi_\ell\,\lambda_\ell\,a_\ell \\
    &= 1.
\end{align*}
As expected, we have a contradiction. Then, we must have at least one $a_n$, $n\in{\{j,j+1,\dots,N\}}$ such that
\begin{equation*}
	\dfrac{\phi_n}{P_n^\delta}\,\dfrac{a_n}{1-a_n} <
    \dfrac{1-\sum_{k=1}^{j-1} \phi_k\,\lambda_k\,a_k}
    {\sum_{k=1}^{j-1} P_k^\delta\,\lambda_k\,a_k+
    \sum_{k=j}^N P_k^\delta\,\lambda_k}.
\end{equation*}
We choose $a_j$ to satisfy the restriction. It is possible to do this for $a_1, a_2, \dots, a_N$. For simplicity of exposition, we showed the procedure in the order $a_1,a_2,\dots,a_N$, however it is easy to see that it can be done for all possible permutations. Therefore, $\cal{R}' \subset \bigcup_{\nu\in\cal{P}} \cal{S}_\nu$. However, since we included all possible delays in Eq.~\eqref{eq:set_identity_1}, we must also have that $ \bigcup_{\nu\in\cal{P}} \cal{S}_\nu \subset \cal{R}'$. Therefore, $\cal{R} = \cal{R}' = \bigcup_{\nu\in\cal{P}} \cal{S}_\nu$.
\end{proof}

%%%%%%%%%%%%%%%%%%%%%%%%%%%%%%%%%%%%%%%%%%%%%%%%%
\bibliographystyle{IEEEtran}
\bibliography{ref}

\end{document}